\documentclass[journal]{IEEEtran}

\usepackage[caption=false,font=normalsize,labelfont=sf,textfont=sf]{subfig}
\usepackage{textcomp}
\usepackage{stfloats}
\usepackage{verbatim}

\usepackage{cite}
\usepackage{graphicx}
\usepackage{amsthm}
\usepackage{amsmath,amsfonts}
\usepackage{mathtools}
\usepackage{url}
\usepackage{glossaries}
\usepackage{algorithmic}
\usepackage{algorithm}
\usepackage{array}

\newtheorem{theorem}{Theorem}

\newtheorem{proposition}[theorem]{Proposition}

\newtheorem{definition}[theorem]{Definition}
\newtheorem{remark}{Remark}

\usepackage{xcolor}

\newcommand{\F}{\mathbb{F}}

\newcommand{\R}{\mathbb{R}}

\newcommand{\E}{\mathbb{E}}

\newcommand{\Prob}{\mathbb{P}}

\newcommand{\Loss}{\mathcal{L}}

\newcommand{\BLER}{\mathrm{BLER}}

\newcommand{\pb}{POWERBLAST}

\newacronym{awgn}{AWGN}{additive white Gaussian noise}
\newacronym{awgnf}{AWGN-F}{additive white Gaussian noise channel with feedback}
\newacronym{snr}{SNR}{signal-to-noise ratio}
\newacronym{bler}{BLER}{block error rate}
\newacronym{ber}{BER}{bit error rate}
\newacronym{sk}{SK}{Schalkwijk-Kailath}
\newacronym{pm}{PM}{posterior matching}
\newacronym{llr}{LLR}{log-likelihood ratio}
\newacronym{ppv}{PPV}{Polyanskiy, Poor, and Verd\'u}
\newacronym{dlfc}{DL-ECFCs}{deep-learned error-correcting feedback codes}

\newacronym{pam}{PAM}{pulse-amplitude modulation}
\newacronym{mvue}{MVUE}{minimum-variance unbiased estimator}
\newacronym{lmmse}{LMMSE}{linear minimum mean square error estimator}
\newacronym{mmse}{MMSE}{minimum mean square error estimator}
\newacronym{mse}{MSE}{mean square error}
\newacronym{dp}{DP}{dynamic programming}
\newacronym{map}{MAP}{maximum a posteriori}
\newacronym{cce}{CCE}{categorical cross-entropy}

\newacronym{fe}{FE}{feature extractor}

\newacronym{mlp}{MLP}{multilayer perceptron}

\newacronym{npmf}{NPMF}{Nonlinear Posterior-Mean Feedback Code}
\newacronym{apmf}{APMF}{Active Posterior-Mean Feedback Code}

\begin{document}

\title{Nonlinear Posterior-Mean Feedback Codes for AWGN Channels}


\author{Yingyao Zhou,~\IEEEmembership{Student Member,~IEEE,} Natasha Devroye,~\IEEEmembership{Fellow,~IEEE}, Milo\v{s}~\v{Z}efran, ~\IEEEmembership{Senior Member,~IEEE,} Gy\"{o}rgy~Tur\'{a}n
\thanks{This work was supported by NSF awards 2217023, 2240532 and by the AI National Laboratory Program (RRF-2.3.1-21-2022-00004). The contents of this article are solely the responsibility of the authors and do not necessarily represent the official views of the NSF.}
\thanks{Yingyao Zhou, Natasha Devroye and Milo\v{s} \v{Z}efran are with the Department of Electrical and Computer Engineering, University of Illinois Chicago, Chicago, IL, USA {(e-mail: \{yzhou238, devroye, mzefran\}@uic.edu)}.}
\thanks{Gy\"{o}rgy~Tur\'{a}n is with the Department of Mathematics, Statistics, and Computer Science, University of Illinois Chicago, Chicago, IL, USA 
{and with the HUN-REN-SZTE Research Group on AI, University of Szeged, Hungary (e-mail: gyt@uic.edu).} }}




\maketitle

\begin{abstract}
Feedback can improve the reliability of communication over additive white Gaussian noise (AWGN) channels. Classical feedback codes are interpretable, or easy to understand, but often rely on linear estimation, while deep-learned feedback codes can achieve strong performance but require many learned parameters and are often black boxes that are difficult to interpret. In this work, we propose a new posterior-mean feedback coding framework for AWGN channels with feedback. The proposed scheme uses posterior-mean refinement to construct nonlinear feedback codes under both noiseless passive feedback and noisy active feedback, with maximum a posteriori (MAP) decoding at the receiver. This scheme is analytically described, but a few learned parameters related to power allocation, and hence we consider this an ``interpretable,'' or human-understandable code. We further develop a projection-based design to improve robustness under noisy feedback and support larger message sizes. Numerical results show that the proposed schemes achieve strong finite-blocklength performance and outperform several analytical and learned feedback coding baselines, while using only a small number of learned design parameters.
\end{abstract}

\begin{IEEEkeywords}
AWGN channel with feedback, nonlinear feedback codes, posterior mean estimation, passive feedback, active feedback
\end{IEEEkeywords}



\section{Introduction}
%
%
%
%
\IEEEPARstart{F}{eedback} is widely used in modern communication systems to improve reliability and adapt transmission to the channel. Feedback appears in several forms, including automatic repeat request (ARQ), hybrid ARQ (HARQ), channel state information (CSI), and channel output feedback. ARQ and HARQ allow the receiver to request retransmission when decoding fails, while CSI feedback enables adaptation to varying channel conditions. In this paper, we focus on {non-time-varying channels with channel output feedback}, a setting initially studied by Shannon~\cite{shannon1956zero}, where the transmitter obtains information about the receiver's channel observations. {This setting provides an upper benchmark for schemes that process the receiver's observations.} Although feedback does not increase the capacity of memoryless point-to-point channels~\cite{shannon1956zero}, it can simplify coding schemes and improve reliability in the finite-blocklength regime. This regime is increasingly important for emerging 6G wireless systems, which are expected to support massive connectivity, low latency, and highly reliable short-packet communication, where reducing the block error rate at a fixed blocklength is often more meaningful than asymptotically approaching capacity~\cite{Chowdhury2020,Brinton2025,Durisi2016}.

For the AWGN channel with output feedback, several analytical feedback coding schemes have been developed. The \gls{sk} scheme is a classical linear feedback code for noiseless feedback and achieves capacity with doubly exponential block error decay in the blocklength~\cite{schalkwijk1966coding}. Posterior matching provides a general feedback coding framework for memoryless channels and recovers the \gls{sk} scheme for the AWGN channel as a special case~\cite{shayevitz2011optimal}. The \gls{ppv} scheme provides a nonlinear feedback code for single-bit transmission and asymptotically achieves zero error probability with minimum energy~\cite{PPV}. However, these schemes require noiseless feedback, and their performance can degrade significantly when the feedback link is noisy. For noisy passive feedback, Chance and Love proposed a concatenated coding scheme, known as the CL scheme, in which a linear feedback scheme is used as the inner code and a forward error-correcting code is used as the outer code~\cite{chance2011concatenated}. The inner linear scheme optimizes the encoding and combining coefficients to improve the received \gls{snr}. More recently, Mishra \emph{et al.} introduced a sequential linear coding framework and derived the optimal scheme within this class using dynamic programming, which outperforms the CL inner scheme in noisy feedback settings~\cite{mishra2023linear}. However, these approaches remain restricted to linear feedback codes. This restriction is fundamental, since linear feedback codes cannot achieve a positive code rate over Gaussian channels with noisy feedback~\cite{kim2007gaussian}. To address this, Ben-Yishai and Shayevitz introduced Modulo-SK, an active feedback scheme in which the receiver encodes its observation before transmitting it over the noisy feedback link using a modulo operation~\cite{ben2017interactive}. While Modulo-SK improves reliability, its performance gains rely on the asymptotic regime, where modulo-aliasing errors can be made negligible. The design of nonlinear feedback codes that directly minimize block error rate at finite blocklengths remains an open problem.

This challenge has motivated deep-learned feedback codes, starting from recurrent neural network (RNN)-based schemes such as Deepcode~\cite{kim2018deepcode, mashhadi2021drf} and later transformer-based designs such as GBAF~\cite{shao2023attentioncode,ozfatura2022all,kim2023robust}. More recently, Lightcode proposed a lightweight feedback code that achieves current state-of-the-art finite-blocklength reliability with significantly reduced model complexity~\cite{ankireddy2025lightcode}. Beyond passive feedback, deep learning has also been extended to active feedback settings through schemes such as ActiveFB and BAAF~\cite{activeFB, baaf}. Subsequently, to relax the fixed-blocklength constraint of existing learned feedback codes, Deep Variable-Length Feedback (DeepVLF) codes have been proposed to dynamically adapt the transmission length based on feedback, improving efficiency especially in high-rate regimes~\cite{ding2026deep}. {However, DeepVLF operates in a variable-length regime and is therefore not directly comparable with the fixed-blocklength setting considered in this work. Although we focus on fixed-blocklength coding, the proposed framework can be naturally extended to variable-length transmission using the explicitly computed posterior distribution to define a stopping rule.} Existing learned feedback codes often rely on many parameters and complex architectures, making it difficult to understand, or ``interpret'', how feedback is used for error correction. Our previous work on interpreting Deepcode showed that learned feedback encoders often use nonlinear structure to enable power-efficient correction rules~\cite{Deepcode1, Deepcode2, Deepcode3}. However, these interpretations remain approximations of learned codes and do not provide fully analytical coding schemes with rigorous information-theoretic characterizations.

Motivated by these observations, this paper proposes a posterior-mean feedback coding framework for AWGN channels with feedback. The main idea is to directly use the posterior distribution to construct nonlinear feedback codes through posterior-mean refinement. The resulting schemes have analytical and interpretable encoder
structures and allow for \gls{map} decoding, while requiring only a small number of learned parameters. {Here, interpretability refers to the explicit analytical forms of the
encoder and decoder, which reveal how feedback is used for error correction, while
learning is limited to power-allocation and projection parameters.}

The main contributions are:
\begin{itemize}
    \item We propose a \gls{npmf} scheme for noiseless passive feedback. The closed-form encoder transmits posterior-mean estimation errors,
while the receiver performs optimal \gls{map} decoding using the
accumulated posterior distribution. We show that \gls{npmf} reduces to the \gls{ppv} scheme for single-bit communication and naturally extends to larger message sizes. {Numerical results show that in the considered short-blocklength settings,
\gls{npmf} achieves the lowest \gls{bler} among the evaluated
analytical and learned baselines.}

    \item We extend the posterior-mean framework to noisy active feedback through the proposed \gls{apmf}, where the receiver feeds back the posterior mean rather than channel outputs, thereby reducing the transmitter--receiver
mismatch caused by feedback noise.
    
    \item We further develop a learned projection-based \gls{apmf} scheme that
mitigates the one-dimensional packing limitation of \gls{pam} and
improves robustness for larger message sizes under noisy feedback. {The encoder and optimal \gls{map} decoder remain analytical,
with only $TK+(2T-1)$ trainable projection and power-allocation
parameters, orders of magnitude fewer than those of existing deep-learned feedback
codes, while achieving competitive or better finite-blocklength
\gls{bler} performance in the considered settings.}

    \item We derive analytical union bounds for the proposed schemes and
validate their finite-blocklength performance through numerical results.
    
\end{itemize}

The rest of this paper is organized as follows. Section~\ref{sec:system_model} introduces the system model. Section~\ref{sec:related_work} reviews existing analytical and learned feedback coding schemes. Section~\ref{sec:npmf} presents the proposed \gls{npmf} scheme and its analysis. Section~\ref{sec:apmf} extends the framework to \gls{apmf}. Section~\ref{sec:conclusion} concludes the paper.

\subsection{Notation} 
Random variables are denoted by capital letters (e.g., $M$), their realizations by lowercase letters (e.g., $m$), and vectors by boldface letters (e.g., $\mathbf{M}$). Probability is represented by $\Prob(\cdot)$, expectation by $\E[\cdot]$, and entropy by $\mathbb{H}(\cdot)$. Unless specified, all logarithms use base~2. The symbol $\F_2$ denotes the finite field with elements 0 and 1, while $\R^n$ represents $n$-dimensional real vectors. The cumulative distribution function (c.d.f.) of a random variable $M$ is defined as $F_M(m) = \Prob(M \leq m)$, and the inverse c.d.f. is defined as $F_M^{-1}(t) = \inf \{ m \in \mathbb{R} : F_M(m) \geq t \}, \quad t \in (0, 1)$. Subscripts $i$ represent time indices, while the superscript of a vector indicates its length, e.g., $\mathbf{M}^K = (M_1, \ldots, M_K)$.  $Q(x)$ denotes the complementary cumulative distribution function of the standard normal distribution $\mathcal{N}(0,1)$, given by $Q(x) = \frac{1}{\sqrt{2\pi}} \int_x^\infty \exp\left(-\frac{u^2}{2}\right) \, du$.  An interval in $\mathbb{R}$ is denoted by $[x, y]$. $\{h_{i}\}_{i=1}^T$ denotes a sequence of functions.


\section{System Model}\label{sec:system_model}
\begin{figure}[!t]
    \centering
    \includegraphics[width=\columnwidth]{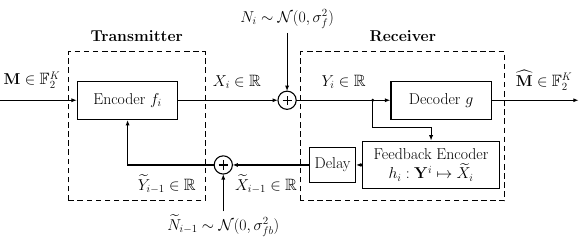}
    \caption{Point-to-point AWGN channel with feedback. In passive feedback, the receiver directly sends its past observations. In active feedback, the receiver encodes its past observations before sending them back.}
    \label{fig:awgn}
\end{figure}
We consider a real-valued point-to-point \gls{awgnf}, as illustrated in Fig.~\ref{fig:awgn}. 
The transmitter sends a message $\mathbf{M} = (M_1, \ldots,\allowbreak M_K)\in \mathbb{F}_2^K$ to the receiver over $T$ channel uses. 
The code rate is defined as $R=K/T$. At channel use $i\in\{1,2,\ldots,T\}$, the forward channel is modeled as
\begin{equation}
Y_i = X_i + N_i,
\end{equation}
where $N_i\sim \mathcal{N}(0,\sigma_f^2)$ denotes independent and identically distributed (i.i.d.) Gaussian noise, and $X_i \in \mathbb{R}$ is the transmitted symbol at time $i$, generated from the message $\mathbf{M}$ and the available past feedback. 

We assume a feedback link with one-unit delay. Thus, at time $i$, the transmitter receives feedback based on the previous channel outputs $\mathbf{Y}^{i-1} = (Y_1, \ldots, \allowbreak Y_{i-1})$. We consider both passive and active feedback. In passive feedback, the receiver directly sends the past channel output, 
\begin{equation}
    \widetilde{X}_{i-1}=Y_{i-1}, \qquad i\in\{2,\ldots,T\}.
\end{equation}
In active feedback, the receiver first encodes its past observations $\mathbf{Y}^{i-1}$ through a feedback encoder $h_{i-1}:\R^{i-1}\rightarrow\R$ and sends
\begin{equation}
    \widetilde{X}_{i-1}=h_{i-1}(\mathbf{Y}^{i-1}), \qquad i\in\{2,\ldots,T\}.
\end{equation}
The transmitter observes the noisy feedback output
\begin{equation}
    \widetilde{Y}_{i-1}
    =
    \widetilde{X}_{i-1}
    +
    \widetilde{N}_{i-1},
\end{equation}
where $\widetilde{N}_{i-1}\sim\mathcal{N}(0,\sigma_{fb}^2)$ is i.i.d. feedback noise, with $\widetilde{N}_{i-1}=0$ in the noiseless feedback case.



An $(2^K,T)$ feedback code consists of a sequence of encoding functions $\{f_i\}_{i=1}^T$, optional feedback encoding functions $\{h_i\}_{i=1}^{T-1}$, and a decoding function $g$. At channel use $i$, the transmitter maps the message $\mathbf{M}$ and past feedback observations $\widetilde{\mathbf{Y}}^{i-1}=(\widetilde{Y}_1,\ldots,\widetilde{Y}_{i-1})$ to the transmitted symbol $X_i\in\mathbb{R}$, 
\begin{equation}
    X_i = f_i(\mathbf{M},\widetilde{\mathbf{Y}}^{i-1}),
    \qquad i=1,\ldots,T .
\end{equation}

{The forward symbols satisfy either a per-symbol average power constraint,}
\begin{equation}
    \E[X_i^2]\leq P_f,
    \qquad i=1,\ldots,T,
\end{equation}
or a per-codeword average power constraint,
\begin{equation}
    \frac{1}{T}\E\!\left[\sum_{i=1}^T X_i^2\right]\leq P_f.
    \label{eq:avg_power_constraint}
\end{equation}
Similarly, for active feedback, the feedback symbols satisfy the corresponding per-symbol or per-codeword average power constraint. For the per-codeword case,
\begin{equation}
    \frac{1}{T-1}\E\!\left[\sum_{i=1}^{T-1}\widetilde{X}_i^2\right]\leq P_{fb}
\end{equation}
for the per-codeword average-power case.

\begin{remark}
For passive feedback, the transmission power is determined by the forward channel statistics and is bounded by $P_f + \sigma_f^2$. Therefore, following prior work (e.g., Deepcode~\cite{kim2018deepcode}), we do not impose an explicit power constraint on passive feedback. We therefore evaluate passive and active feedback schemes relative to their corresponding baselines.
\end{remark}

The forward and feedback \glspl{snr} are defined as
\begin{equation}
    \mathrm{SNR}_f \coloneqq \frac{P_f}{\sigma_f^2},
    \qquad
    \mathrm{SNR}_{fb} \coloneqq \frac{P_{fb}}{\sigma_{fb}^2}.
\end{equation}
Without loss of generality, we set $P_f=1$ and, for active feedback, $P_{fb}=1$.
After $T$ channel uses, the decoder $g:\mathbb{R}^T\rightarrow\mathbb{F}_2^K$ maps the received outputs to an estimate of the transmitted message,
\begin{equation}
\widehat{\mathbf{M}} = g(\mathbf{Y}^T).
\end{equation}

The reliability of communication is measured by the \gls{bler}, defined as
\begin{equation}
 \BLER \coloneqq \mathbb{P}(\mathbf{M} \neq \widehat{\mathbf{M}}).
\end{equation}
The forward encoder, feedback encoder, and decoder may be analytically designed or learned from data. 
In this paper, we consider passive feedback for  noiseless feedback and active feedback for noisy feedback. In the following sections, we first review existing analytical and neural feedback coding schemes, and then introduce the proposed posterior-mean feedback codes.


\section{Existing Feedback Coding Schemes}\label{sec:related_work}

In this section, we review existing feedback coding schemes for the \gls{awgnf}. We first discuss schemes under passive feedback, including the analytical \gls{sk} scheme~\cite{schalkwijk1966coding}, its variant {\pb}~\cite{ankireddy2025lightcode}, the \gls{pm} scheme~\cite{shayevitz2011optimal}, and the nonlinear \gls{ppv} scheme based on the \gls{llr}~\cite{PPV}. We also discuss deep-learned feedback codes under passive feedback, where Lightcode~\cite{ankireddy2025lightcode} represents the current state of the art~\cite{kim2018deepcode,shao2023attentioncode,ozfatura2022all,kim2023robust,ankireddy2025lightcode}.

We then review schemes under active feedback, including Modulo-\gls{sk}~\cite{ben2017interactive}, BAAF~\cite{baaf}, and activeFB~\cite{activeFB}. These schemes use a feedback encoder at the receiver to improve robustness when the feedback link is noisy.

\subsection{Passive Feedback Schemes}
\subsubsection{Schalkwijk-Kailath (SK) Scheme}

The \gls{sk} scheme is a linear feedback code designed for noiseless output feedback~\cite{schalkwijk1966coding}. Initially, a message $\mathbf{M} \in \mathbb{F}_2^K$ is mapped to a \gls{pam} symbol $\Theta \in \{\pm \eta, \pm 3\eta, \dots, \pm (2^K-1)\eta\}$, where $\eta = \sqrt{3/(2^{2K}-1)}$ {gives $\E[\Theta^2]=1$}. In the first round, the transmitter sends with power $P_f$, $X_1 = \sqrt{P_f}\Theta$. The estimated \gls{pam} symbol at the receiver at time $i$ is denoted by $\widehat{\Theta}_i$, with the corresponding estimation error defined as $\epsilon_i = \widehat{\Theta}_i - \Theta$ and root mean square error as $D_i = \sqrt{\mathbb{E}[\epsilon_i^2]}$. After the first round, the initial estimate $\widehat{\Theta}_1$ is formed based on $Y_1$ using either the \gls{mvue}, $\widehat{\Theta}_1 = Y_1/\sqrt{P_f}$, or the \gls{lmmse}, $\widehat{\Theta}_1 =\sqrt{P_f}Y_1/(P_f+\sigma_f^2)$. In each subsequent round $i \in \{2, \dots, T\}$, the transmitter sends a scaled version of the previous estimation error,
\begin{equation*}
X_i = \frac{\sqrt{P_f}}{D_{i-1}}\epsilon_{i-1}
= \frac{\sqrt{P_f}}{D_{i-1}}\left(\widehat{\Theta}_{i-1}-\Theta\right).
\end{equation*}
The receiver then estimates this error using \gls{lmmse},
\begin{equation*}
    \widehat{\epsilon}_{i-1}
    = \frac{\sqrt{P_f}D_{i-1}}{P_f+\sigma_f^2}Y_i 
\end{equation*}
and updates its estimate recursively as $\widehat{\Theta}_i = \widehat{\Theta}_{i-1} - \widehat{\epsilon}_{i-1}$. Finally, the receiver maps the final estimate $\widehat{\Theta}_T$ to the nearest \gls{pam} constellation point.


We use the \gls{lmmse} estimator in the first round for finite-blocklength simulations due to its improved \gls{bler} performance, while adopting the more analytically tractable \gls{mvue} for the error probability analysis below. For rate $R=K/T$, the \gls{bler} under the \gls{mvue} and a per-symbol average power constraint is
\begin{equation*}
    \BLER_{\text{sk}} = 2 (1 - 2^{-K}) Q\left(\sqrt{\frac{3\mathrm{SNR}_f(1 + \mathrm{SNR}_f)^{T-1}}{2^{2K} - 1}}\right).
\label{eq:skpe}
\end{equation*}
This expression reflects the well-known doubly exponential error decay of the \gls{sk} scheme in the block length $T$.

The \pb~\cite{ankireddy2025lightcode} scheme is a variation of the
\gls{sk} scheme that differs only in the final transmission round {and uses the per-symbol average power constraint.} In this round, rather than sending the continuous estimation error,
the transmitter sends a discrete symbol representing the index error,
$X_T=\sqrt{P_f}[I(\widehat{\Theta}_{T-1})-I(\Theta)]/D_I$, where
$I(\cdot)\in\{1,\dots,2^K\}$ denotes the \gls{pam} symbol index and
$D_I=\sqrt{\mathbb{E}[(I(\widehat{\Theta}_{T-1})-I(\Theta))^2]}$.
At the receiver, a \gls{map} decoder is used to recover the original
message. Thus, the \gls{bler} for rate $R=K/T$ is
\begin{equation*}
\begin{aligned}
\BLER_{\text{pb}}
={}&
2(1-p_{\text{sk}})
Q\!\left(\gamma\sqrt{\mathrm{SNR}_f}\right) \\
&+
p_{\text{sk}}
Q\!\left(
\left(\frac{1}{p_{\text{sk}}}-\gamma\right)
\sqrt{\mathrm{SNR}_f}
\right),
\end{aligned}
\end{equation*}
where $p_{\text{sk}}$ is the $\BLER_{\text{sk}}$ for code rate
$R=K/(T-1)$, and $\gamma$ denotes the detection threshold
in~\cite{ankireddy2025lightcode}.

\subsubsection{Posterior Matching Scheme}
The posterior matching scheme~\cite{shayevitz2011optimal} is a general recursive feedback framework that extends the \gls{sk} to arbitrary memoryless channels. The main idea is to maintain a posterior distribution of the message at the receiver and transmit a symbol that makes each new observation as informative as possible. According to the posterior matching principle, the transmitted symbol must be a function of a random variable $V$ that captures the information about the message not contained in the past channel outputs $\mathbf{Y}^{i-1}$. Therefore, $V$ is statistically independent of $\mathbf{Y}^{i-1}$, ensuring that the transmission conveys only new information. In addition, the message point must be almost surely uniquely recovered from $(V,\mathbf{Y}^{i-1})$.

Specifically, let $\Theta_0 \sim \mathcal{U}(0,1)$ be the message point. The encoding function at the $i$-th round is 
\begin{equation}
X_i
=
F_X^{-1}
\!\left(
F_{\Theta_0|\mathbf{Y}^{i-1}}
(\Theta_0 \mid \mathbf{Y}^{i-1})
\right)
\label{eq:pm_enc}
\end{equation}
where $F_{\Theta_0|\mathbf{Y}^{i-1}}(\cdot \mid \mathbf{Y}^{i-1})$ denotes the posterior c.d.f. of $\Theta_0$ given $\mathbf{Y}^{i-1}$, and $F_X^{-1}$ is the inverse c.d.f. of the desired input distribution. 
{For the \gls{awgnf}, the capacity-achieving input distribution is
$X \sim \mathcal{N}(0,P_f)$, which satisfies the per-symbol average
power constraint $\E[X_i^2] = P_f$. Its c.d.f. is used as $F_X$
in \eqref{eq:pm_enc}.} In this case, the posterior c.d.f. is shaped by $F_X^{-1}$ into a Gaussian channel input, and the posterior matching scheme reduces to the \gls{sk} scheme, where the recursive update becomes linear. {The relation between posterior matching and the proposed \gls{npmf} is discussed in Section~\ref{sec:relationwithPM}.}

\subsubsection{Polyanskiy, Poor, and Verdú (PPV) Scheme}
The \gls{ppv} scheme is a nonlinear  coding scheme for single-bit transmission~\cite{PPV} in which the transmitted symbol is a nonlinear function of the \gls{llr}, scaled to meet the power constraint.
Specifically, for a single bit $M\in \mathbb{F}_2$, with BPSK modulation $W = 2M-1\in \{-1, 1\}$, the transmitted symbol at the $i$-th round is
\begin{equation}
X_i = f_{i}(W, \mathbf{Y}^{i-1}) = \frac{W d_i}{1 + e^{W S_{i-1}}}, \qquad i=1,\ldots,T ,
\label{eq:ppv}
\end{equation}
where the \gls{llr} is given by $S_{i-1} = \log\frac{\mathbb{P}\left(W = +1\mid \mathbf{Y}^{i-1}\right)}{\mathbb{P}\left(W = -1\mid \mathbf{Y}^{i-1}\right)}$, $\mathbf{Y}^{i-1} = (Y_1,\ldots,Y_{i-1})$, and $d_i$ is a coefficient related to the power constraint, {which can be either the per-symbol or per-codeword average power constraint}. By setting $d_i = d$ for all $i$, the \gls{ppv} scheme has been shown to achieve zero error probability with the minimum energy per bit as the {number of} channel uses $T \rightarrow \infty$~\cite{PPV}. For finite block lengths, optimizing $d_i$ yields further BLER improvement using \gls{dp}~\cite{activeFB}. 

After $T$ rounds, the receiver decodes the message $\widehat{W}$ based on the final LLR, $S_T$. The $\BLER$ after $T$ rounds can be represented as a recursion function~\cite{activeFB}:
\begin{equation}
    \BLER_{\text{ppv}} = Q\left(\frac{\mu_T}{\sigma_T}\right),
\label{eq:ppvpe}
\end{equation}
where $\mu_N$ and $\sigma_N$
are updated recursively according to \eqref{eq:mu} and \eqref{eq:sigma},
\begin{align}
    \mu_i &= \mu_{i-1} + \frac{d_i^2}{2\sigma_f^2},
    \label{eq:mu}\\
    \sigma_i^2 &= \sigma_{i-1}^2 + \frac{d_i^2}{\sigma_f^2}.
    \label{eq:sigma}
\end{align}
By symmetry, it is sufficient to analyze $\Prob\left(S_T < 0\mid W = +1\right)$. Given $W = +1$, the \gls{llr} evolves as
\begin{equation*}
S_i = S_{i-1} + \frac{d_i^2}{2\sigma_f^2} + \frac{d_i}{\sigma_f^2}N_i,
\end{equation*}
where $N_i \sim \mathcal{N}(0,\sigma_f^2)$. Consequently, $S_T$ is Gaussian with mean $\mu_T$ and variance $\sigma_T^2$, which leads to the BLER expression in \eqref{eq:ppvpe}.

\subsubsection{Lightcode}
Lightcode~\cite{ankireddy2025lightcode} parameterizes the encoder and decoder using neural networks. Both the encoder and decoder consist of a feedforward feature extractor followed by a multilayer perceptron. At the $i$-th round, the encoder generates the transmitted symbol based on the message and the past feedback,
\begin{equation*}
    X_i
    =
    \alpha_i
    f_{i}
    \left(
    \mathbf{M},
    \widetilde{Y}_1,
    \ldots,
    \widetilde{Y}_{i-1}
    \right),
    \quad i=1,\ldots,T,
\label{eq:lightcode_enc}
\end{equation*}
where $\alpha_i$ is a learned scaling coefficient satisfying the per-codeword average power constraint $\frac{1}{T}\sum_{i=1}^T\alpha_i^2 = P_f$. After $T$ rounds, the decoder estimates the transmitted message as
\[
\widehat{\mathbf M}=g(\mathbf Y^T).
\]


The neural network is trained by minimizing the expected \gls{cce} loss. Treating each of the $2^K$ messages $\mathbf{M}$ as a distinct class, the loss function is defined as
\begin{equation}
\Loss_{\text{CCE}}
=
\mathbb{E}_{\mathbf{Y}^T}
\left[
-\sum_{\mathbf{m}}
\mathbb{P}(\mathbf{m}\mid\mathbf{Y}^T)
\log_2\mathbb{Q}(\mathbf{m}\mid\mathbf{Y}^T)
\right],
\label{eq:ce}
\end{equation}
where $\mathbb{P}(\mathbf{m}\mid\mathbf{Y}^T)$ is the true posterior probability and $\mathbb{Q}(\mathbf{m}\mid\mathbf{Y}^T)$ is the predicted probability of message $\mathbf{m}$ given the channel outputs $\mathbf{Y}^T$. This loss can be decomposed as
\begin{equation*}
\Loss_{\text{CCE}}
=
\mathbb{H}(\mathbf{M}\mid\mathbf{Y}^T)
+
\mathbb{E}_{\mathbf{Y}^T}\!\left[
D_{\mathrm{KL}}\!\left(
\mathbb{P}(\mathbf{M}\mid\mathbf{Y}^T)
\,\Vert\,
\mathbb{Q}(\mathbf{M}\mid\mathbf{Y}^T)
\right)
\right].
\end{equation*}
Since $D_{\mathrm{KL}}(\mathbb{P}\|\mathbb{Q}) \geq 0$,
the minimum over predicted distributions is attained when the predicted posterior matches
the true posterior. Hence,
\begin{equation}
\Loss_{\text{CCE}}^\star
=
\mathbb{H}(\mathbf{M}\mid\mathbf{Y}^T).
\label{eq:ce_min}
\end{equation}


The following proposition relates the optimal \gls{cce}
loss to the \gls{bler} of the \gls{map} decoder.

\begin{proposition}\label{prop:sandwich}
Let $\BLER = \mathbb{P}(\mathbf{M} \neq \widehat{\mathbf{M}})$ denote the block error probability of the \gls{map} decoder. The optimal \gls{cce} loss $\Loss_{\text{CCE}}^\star$ satisfies the sandwich inequality
\begin{equation}
2 \BLER \leq \Loss_{\text{CCE}}^\star \leq \mathbb{H}_2(\BLER) + \BLER \log_2(2^K - 1),
\end{equation}
where $\mathbb{H}_2(p) = -p \log_2 p - (1-p)\log_2(1-p)$ is the binary entropy function.
\end{proposition}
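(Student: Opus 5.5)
By \eqref{eq:ce_min}, $\Loss_{\text{CCE}}^\star=\mathbb{H}(\mathbf{M}\mid\mathbf{Y}^T)$, so the statement is a two-sided bound on the equivocation in terms of the MAP error probability. I will work pointwise in $\mathbf{y}=\mathbf{Y}^T$ and then average over $\mathbf{Y}^T$. Write $p_{\max}(\mathbf{y})=\max_{\mathbf{m}}\mathbb{P}(\mathbf{m}\mid\mathbf{y})$. The MAP decoder picks the maximizing message, so its conditional error probability is $e(\mathbf{y})=1-p_{\max}(\mathbf{y})$, and $\BLER=\E[e(\mathbf{Y}^T)]$.

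\textbf{Upper bound.} This is Fano's inequality. I will define $E=\mathbb{1}\{\mathbf{M}\neq\widehat{\mathbf{M}}\}$ and expand $\mathbb{H}(\mathbf{M},E\mid\mathbf{Y}^T)$ in two ways. Since $E$ is a function of $(\mathbf{M},\mathbf{Y}^T)$, this joint entropy equals $\mathbb{H}(\mathbf{M}\mid\mathbf{Y}^T)$. Expanding the other way gives $\mathbb{H}(E\mid\mathbf{Y}^T)+\mathbb{H}(\mathbf{M}\mid E,\mathbf{Y}^T)$. Then:
\begin{itemize}
\item $\mathbb{H}(E\mid\mathbf{Y}^T)\le\mathbb{H}(E)=\mathbb{H}_2(\BLER)$.
\item When $E=0$, $\mathbf{M}=\widehat{\mathbf{M}}$ is known, so $\mathbb{H}(\mathbf{M}\mid E=0,\mathbf{Y}^T)=0$.
\item When $E=1$, $\mathbf{M}$ ranges over the $2^K-1$ messages other than $\widehat{\mathbf{M}}$, so $\mathbb{H}(\mathbf{M}\mid E=1,\mathbf{Y}^T)\le\log_2(2^K-1)$.
\end{itemize}
Weighting the last two terms by $\mathbb{P}(E=1)=\BLER$ gives $\Loss_{\text{CCE}}^\star\le\mathbb{H}_2(\BLER)+\BLER\log_2(2^K-1)$. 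This step is routine.

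\textbf{Lower bound.} For a distribution $p$ with largest mass $p_{\max}$, I will use the chain
\[
H(p)\ \ge\ H_\infty(p)=-\log_2 p_{\max}\ \ge\ 2(1-p_{\max}).
\]
The first inequality holds because $H(p)=\sum_i p_i\log_2(1/p_i)\ge\sum_i p_i\log_2(1/p_{\max})$. The second needs $-\log_2 x\ge 2(1-x)$ on $[1/2,1]$. I will check this by writing $\phi(x)=-\log_2 x-2(1-x)$: it is convex, $\phi(1)=0$, and $\phi(1/2)=0$. Convexity with these two zeros only gives $\phi\le 0$ on $[1/2,1]$, so $\phi\ge 0$ holds only outside that interval. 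That fails on exactly the wrong range.

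I will therefore replace min-entropy by the guessing-entropy bound $H(p)\ge -\log_2\sum_i p_i^2$. Since $\sum_i p_i^2\le p_{\max}$, this still only yields $-\log_2 p_{\max}$, so it does not help. The correct tool is the classical bound $H(p)\ge 2(1-p_{\max})$ in bits, which holds for all distributions; in the Feder--Merhav form it reads $H\ge\phi^*(\pi_e)$, and its piecewise-linear lower bound on $[0,1/2]$ is exactly $2\pi_e$.

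I will prove it as follows. Sort the masses $p_1\ge p_2\ge\cdots$. The entropy is Schur-concave, and among all distributions with a given $p_{\max}=p_1$ it is minimized by the distribution $(p_1,\ldots,p_1,r)$ with $0\le r<p_1$. On each interval $p_1\in[1/(k+1),1/k]$, the entropy of this extremal distribution is concave in $p_1$. It therefore suffices to check the inequality at the endpoints $p_1=1/k$, where $H=\log_2 k$ and the claim becomes $\log_2 k\ge 2(1-1/k)$. For $k=1,2$ this holds with equality. For $k\ge3$, $\log_2 k\ge\log_2 3>1.58$ and $\log_2 k$ grows faster than the bounded right side $2(1-1/k)<2$, so it holds (at $k=4$, $2\ge 1.5$).

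\textbf{Averaging.} Applying the pointwise bound at each $\mathbf{y}$ gives $\mathbb{H}(\mathbf{M}\mid\mathbf{Y}^T=\mathbf{y})\ge 2e(\mathbf{y})$. Taking the expectation over $\mathbf{Y}^T$ yields $\Loss_{\text{CCE}}^\star\ge 2\BLER$.

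\textbf{Main obstacle.} The hard step is this pointwise lower bound: the reduction to the extremal distribution, and concavity of its entropy on each segment between consecutive values $1/(k+1)$ and $1/k$. The naive min-entropy route fails there.
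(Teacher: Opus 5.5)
Your proof is correct, but the lower bound takes a genuinely different route from the paper's.

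\textbf{Upper bound.} This is the paper's argument in different packaging. The paper writes the pointwise decomposition $\mathbb{H}(\mathbf{p})=\mathbb{H}_2(e)+e\,\mathbb{H}(\mathbf{r})$, where $\mathbf{r}$ is the renormalized posterior on the non-MAP messages. It bounds $\mathbb{H}(\mathbf{r})\le\log_2(2^K-1)$ and applies Jensen to the concave $\mathbb{H}_2$. Your chain-rule Fano step with $\mathbb{H}(E\mid\mathbf{Y}^T)\le\mathbb{H}(E)$ is the same step.

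\textbf{Lower bound.} You prove the Feder--Merhav bound in full.
\begin{itemize}
\item The distribution $(p_1,\dots,p_1,r)$ majorizes every distribution with the same maximum, since its sorted partial sums $\min(jp_1,1)$ upper-bound those of any competitor.
\item Schur-concavity reduces the problem to this distribution, whose entropy is concave on each segment $[1/(k+1),1/k]$.
\item Concave minus linear is minimized at the vertices $p_1=1/k$, where $\log_2 k\ge 2(1-1/k)$.
\end{itemize}
This is sound, and it buys the tight piecewise-linear envelope, which lies strictly above $2e$ once $e>1/2$. The cost is the majorization machinery.

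\textbf{The paper's route.} The paper uses two elementary cases. For $e\le 1/2$ it drops the nonnegative term $e\,\mathbb{H}(\mathbf{r})$ and uses $\mathbb{H}_2(e)\ge 2e$ by concavity. For $e>1/2$ it uses exactly the min-entropy bound you discarded, $-\log_2(1-e)\ge 2e$, justified by convexity and the tangent at $e=1/2$.

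Your own computation already contains this. You showed $\phi(x)=-\log_2 x-2(1-x)\ge 0$ for $x\le 1/2$, i.e.\ for $e\ge 1/2$. So min-entropy fails only on the range where the binary-entropy bound takes over. Combining the two gives a few-line proof that reuses the decomposition from the upper bound.

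\textbf{A side remark on the discarded idea.} The quantity $-\log_2\sum_i p_i^2$ is the collision (R\'enyi-2) entropy, not a guessing-entropy bound. It also does not fail. Using $\sum_i p_i^2\le p_{\max}^2+(1-p_{\max})^2=1-2e+2e^2$, the function $-\log_2(1-2e+2e^2)$ is concave. It equals $0$ at $e=0$ and $1$ at $e=1/2$, so it also gives $2e$ on $[0,1/2]$.
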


The proof is provided in Appendix~\ref{apx:sandwich}.
For $K=1$, the upper bound reduces to $\mathbb{H}_2(\BLER)$,
recovering the result in~\cite{2023isit}.
Thus, a small optimal \gls{cce} loss implies a small \gls{bler}.

\subsection{Active Feedback Schemes}
\subsubsection{Modulo-SK Scheme}
The Modulo-\gls{sk} scheme extends the \gls{sk} scheme to noisy active feedback~\cite{ben2017interactive}. The forward transmission follows the per-symbol average power constraint as in the \gls{sk} scheme. The classical \gls{sk} scheme is highly sensitive to feedback noise, since the feedback error can accumulate over multiple rounds and degrade the performance. Modulo-\gls{sk} addresses this issue by feeding back an amplified estimate through a modulo operation. Let $\mathbb{M}_d[\cdot]$ denote the modulo operation with interval length $d$, defined as $\mathbb{M}_d[x] \coloneqq x - d \cdot \text{round}\left(\frac{x}{d}\right)$, and let $V_{i-1}\sim\mathcal{U}[-d/2,d/2)$ be a shared dither known to both the transmitter and receiver. The receiver sends
\begin{equation*}
    \widetilde{X}_{i-1}
    =
    \mathbb{M}_d
    \left[
    \gamma_{i-1}\widehat{\Theta}_{i-1}
    +
    V_{i-1}
    \right],
\end{equation*}
where $\gamma_{i-1}$ is a feedback amplification factor and $d=\sqrt{12P_{fb}}$ is chosen to satisfy the per-symbol average power constraint on the feedback link. After observing $\widetilde{Y}_{i-1}=\widetilde{X}_{i-1}+\widetilde{N}_{i-1}$, the transmitter uses its knowledge of $\Theta$ to compute  
\[
    \widetilde{\epsilon}_{i-1}
    =
    \frac{1}{\gamma_{i-1}}
    \mathbb{M}_d
    \left[
    \widetilde{Y}_{i-1}
    -
    \gamma_{i-1}\Theta
    -
    V_{i-1}
    \right].
\]
If no modulo-aliasing occurs, then $\widetilde{\epsilon}_{i-1}
= \epsilon_{i-1}
+ \widetilde{N}_{i-1}/\gamma_{i-1}$.
    
Thus, compared with direct noisy feedback, Modulo-\gls{sk} scales the feedback noise by a factor of $1/\gamma_{i-1}$. The transmitter then sends a scaled version of this noisy estimation error in the next forward transmission, and the receiver uses it to refine its estimate of $\Theta$. The reliability of Modulo-\gls{sk} is determined by both the modulo-aliasing probability and the final \gls{pam} decision error.

\subsubsection{Deep-Learned Active Feedback Schemes}
Recent deep-learned active feedback schemes jointly optimize the forward encoder, feedback encoder, and decoder. 
ActiveFB~\cite{activeFB} uses recurrent neural networks and is
demonstrated for short messages with $K=1$ and $2$. BAAF~\cite{baaf} extends the GBAF architecture~\cite{ozfatura2022all} to active feedback for short messages (e.g., $K=3$), with the encoder, feedback encoder, and decoder all implemented using transformer-based structures. Both ActiveFB and BAAF use the per-codeword average power constraint for the forward and feedback transmissions. These schemes show that learning the feedback encoder can improve reliability under noisy feedback.

\section{Noiseless Feedback: Passive PAM-Based Nonlinear Posterior-Mean Feedback Code}\label{sec:npmf}
Most existing analytical feedback coding schemes are linear, such as \gls{sk}, while nonlinear schemes such as \gls{ppv} are restricted to single-bit transmission. Deep-learned feedback codes can achieve superior performance, but they rely on complex architectures, such as transformers~\cite{ozfatura2022all} and recurrent neural networks~\cite{kim2018deepcode,kim2023robust}, and often require many parameters. Even lightweight designs such as Lightcode~\cite{ankireddy2025lightcode} remain difficult to interpret.

In this section, we propose an analytical and interpretable nonlinear scheme under noiseless passive feedback, called the \gls{npmf}. We call this scheme interpretable as it is analytically defined and the error-correction mechanism is understood; but it has several learned parameters related to the allocation of power over multiple rounds.  The encoder transmits the estimation error based on the posterior mean at each round, while the decoder performs \gls{map} decoding. The proposed scheme reduces to \gls{ppv} for a single bit, i.e., $K=1$, and naturally extends it to multi-bit transmission for $K\geq 2$. We further analyze its finite-blocklength error probability and experimentally show that, under noiseless feedback, \gls{npmf} can outperform Lightcode without requiring a large number of parameters, while remaining interpretable.

\subsection{The Proposed Scheme}
The proposed \gls{npmf} follows the principle of \gls{sk}, where the encoder iteratively transmits the estimation error to the receiver, enabling successive refinement of the message estimate. Consider an \gls{awgnf} channel with noiseless output feedback. The message $\mathbf{M}\in\F_2^K$ is mapped to a \gls{pam} symbol $\Theta\in\{\theta_1,\dots,\theta_U\}$, where
\begin{equation}
    \theta_j = (2j - 1 - U)\eta, \quad j = 1,\dots,U,
    \label{eq:pam}
\end{equation}
with $U=2^K$ and $\eta=\sqrt{3/(2^{2K}-1)}$.

\begin{definition}[Nonlinear Posterior-Mean Feedback Code]
An $(K,T)$ \gls{npmf} code is defined over $T$ rounds. In the first round, the encoder transmits
\begin{equation}
    X_1 = \frac{\lambda_1}{D_0}\Theta,
    \qquad
    D_0=\sqrt{\E[\Theta^2]}=1.
    \label{eq:first_round}
\end{equation}
For $i=2,\dots,T$, the encoder transmits the scaled estimation error
\begin{equation*}
    X_i
    =
    \frac{\lambda_i}{D_{i-1}}
    \left(
    \Theta-\widehat{\Theta}_{i-1}
    \right),
    \quad
    D_{i-1}
    =
    \sqrt{
    \E\!\left[
    \left(
    \Theta-\widehat{\Theta}_{i-1}
    \right)^2
    \right]
    },
\end{equation*}
where $\lambda_i$ is the power allocation coefficient, and $D_{i-1}$ is the root mean square estimation error. The posterior mean, {i.e., the MMSE estimator}, is
\begin{equation}
    \widehat{\Theta}_{i-1}
    =
    \E[\Theta\mid\mathbf{Y}^{i-1}]
    =
    \sum_{j=1}^{U}
    \theta_j
    \Prob(\Theta=\theta_j\mid\mathbf{Y}^{i-1}).
\end{equation}
After $T$ rounds, the \gls{map} decoder produces
\begin{equation}
    \widehat{\Theta}
    =
    \arg\max_{\theta_j\in\{\theta_1,\dots,\theta_U\}}
    \Prob(\Theta=\theta_j\mid\mathbf{Y}^T),
\end{equation}
which is then mapped to $\widehat{\mathbf{M}}$.
\end{definition}

To compute the posterior distribution, the receiver maintains the log-likelihood of each symbol $j$ after $i$ channel uses,
\begin{equation}
    \ell_i(\theta_j)
    \coloneqq
    \log \Prob(\mathbf{Y}^i\mid \Theta=\theta_j).
\end{equation}
With forward noise $N_i\sim\mathcal{N}(0,\sigma_f^2)$, the log-likelihood updates recursively as
\begin{align*}
    \ell_i(\theta_j)
    &=
    \ell_{i-1}(\theta_j)
    -
    \frac{
    \left(
    Y_i-X_i(\theta_j)
    \right)^2
    }
    {2\sigma_f^2}
    -
    \frac{1}{2}\log(2\pi\sigma_f^2) \nonumber\\
    &=
    \ell_{i-1}(\theta_j)
    +
    \frac{1}{\sigma_f^2}
    \left(
    Y_iX_i(\theta_j)
    -
    \frac{X_i^2(\theta_j)}{2}
    \right)
    +
    C_i,
\end{align*}
where $C_i$ is independent of $\theta_j$ and $X_i(\theta_j)$ is the expected transmission symbol when $\Theta = \theta_j$. Since the $U$-ary \gls{pam} symbols are equiprobable, the posterior distribution is the softmax of the log-likelihoods
\begin{equation}
    \Prob(\Theta=\theta_j\mid\mathbf{Y}^i)
    =
    \frac{
    \exp(\ell_i(\theta_j))
    }
    {
    \sum_{k=1}^{U}\exp(\ell_i(\theta_k))
    }.
    \label{eq:absolute_softmax}
\end{equation}


To express the relative belief between two constellation symbols, we define the pairwise \gls{llr}
\begin{equation}
    S_i^{(j,k)}
    =
    \log
    \frac{
    \Prob(\Theta=\theta_j\mid \mathbf{Y}^i)
    }{
    \Prob(\Theta=\theta_k\mid \mathbf{Y}^i)
    }
    =
    \ell_i(\theta_j)-\ell_i(\theta_k).
\end{equation}
For $K=1$, this reduces to the binary \gls{llr} used in the \gls{ppv} scheme.


For any reference index $k$, the posterior mean can be written as
\begin{align}
\widehat{\Theta}_i
&=
\sum_{j=1}^U
\theta_j
\Prob(\Theta=\theta_j\mid \mathbf{Y}^i) \nonumber\\
&=
\frac{
\sum_{j=1}^U
\theta_j
\exp\!\left(S_i^{(j,k)}\right)
}{
\sum_{j=1}^U
\exp\!\left(S_i^{(j,k)}\right)
}.
\label{eq:mmse_exact}
\end{align}
This expression shows that $\widehat{\Theta}_i$ is a nonlinear, softmax-weighted function of the pairwise \gls{llr}s. In contrast to the linear MMSE update in \gls{sk}, the proposed encoder refines the message using the full posterior distribution, which leads to a more power-efficient structure discussed in Section~\ref{sec:skconnection}.

We next analyze the probability of error after $T$ channel uses. Suppose the true transmitted symbol is $\theta_j$. A decoding error occurs if there exists some $k\neq j$ such that 
\begin{equation}
    S_T^{(j,k)}
    =
    \ell_T(\theta_j)-\ell_T(\theta_k)
    <0 .
    \label{eq: error}
\end{equation}
For each round $i$, the transmitted-symbol difference between
hypotheses $\theta_j$ and $\theta_k$ is
\begin{equation*}
X_i(\theta_j)-X_i(\theta_k)
=
\frac{\lambda_i}{D_{i-1}}(\theta_j-\theta_k).
\end{equation*}
Therefore, the pairwise log-likelihood difference is
\begin{equation*}
\begin{aligned}
S_T^{(j,k)}
= \sum_{i=1}^{T} \Bigg[
&\frac{
(\lambda_i/D_{i-1})^2(\theta_j-\theta_k)^2
}{2\sigma_f^2} \\
&+
\frac{
(\lambda_i/D_{i-1})(\theta_j-\theta_k)
}{\sigma_f^2}N_i
\Bigg].
\end{aligned}
\end{equation*}
Since the \gls{pam} constellation spacing is $2\eta$, it follows that $\theta_j-\theta_k = 2\eta(j-k)$.
Consequently, $S_{T}^{(j,k)}$ is Gaussian distributed with mean $\mu_{j,k}$ and variance $\sigma_{j,k}^2$ given by:
\begin{align*}
    \mu_{j,k} & = \frac{2\eta^2(j-k)^2}{\sigma_f^2} \sum_{i=1}^T \frac{\lambda_i^2}{D_{i-1}^2}, \\
    \sigma_{j,k}^2 &= \frac{4\eta^2(j-k)^2}{\sigma_f^2} \sum_{i=1}^T \frac{\lambda_i^2}{D_{i-1}^2} = 2\mu_{j,k}.
\end{align*}

The error probability of choosing $\theta_k$ over $\theta_j$ is
\begin{equation}
    \Prob\!\left(S_T^{(j,k)}<0\right)
    =
    Q\!\left(
    \frac{\mu_{j,k}}{\sigma_{j,k}}
    \right)
    =
    Q\!\left(
    |j-k|\sqrt{\Lambda_T}
    \right),
\end{equation}
where
\begin{equation}
    \Lambda_T
    =
    \frac{\eta^2}{\sigma_f^2}
    \sum_{i=1}^{T}
    \frac{\lambda_i^2}{D_{i-1}^2}.
\end{equation}

Applying the union bound and averaging over the $U=2^K$ equiprobable \gls{pam} symbols, the overall BLER is bounded by
\begin{equation}
\label{eq:union_bound}
    \BLER_{\text{NPMF}}
    \leq
    \frac{2}{U}
    \sum_{m=1}^{U-1}
    (U-m)
    Q\!\left(
    m\sqrt{\Lambda_T}
    \right).
\end{equation}
Under the per-symbol average power constraint, $\lambda_i=1$ for all $i$. Under the per-codeword average power constraint, the power allocation coefficients $\{\lambda_i\}_{i=1}^T$ are optimized across rounds.

\subsection{Connections to Existing Schemes}
In this subsection, we establish the connection between the proposed \gls{npmf} scheme and classical feedback codes. We first show that \gls{npmf} reduces to \gls{ppv} when transmitting a single bit, and then compare it with other well-known linear and nonlinear coding schemes.

\subsubsection{PPV}
Both \gls{npmf} and \gls{ppv} are nonlinear feedback coding schemes, where the transmitted symbols are non-linear functions of both the message and the past feedback observations. The \gls{ppv} scheme is derived from a minimum-energy principle for single-bit transmission and is extended to multi-bit messages through time multiplexing~\cite{PPV}. 

In contrast, \gls{npmf} maps the entire $K$-bit message to a single \gls{pam} symbol and refines this symbol over all $T$ rounds using posterior-mean updates. Furthermore, the \gls{npmf} encoder is shown to minimize the transmission power $\E[|X_i|^2]$ at each time step $i$ subject to an equal-spacing constraint. The following proposition formalizes their connection.

\begin{proposition}
Consider the \gls{npmf} encoder
\begin{equation*}
    X_i
    =
    \frac{\lambda_i}{D_{i-1}}
    \left(
    \Theta-\E[\Theta\mid \mathbf{Y}^{i-1}]
    \right).
\end{equation*}

\begin{itemize}
\item[(i)] For $K=1$ with $\Theta\in\{-1,+1\}$, \gls{npmf} reduces to
\begin{equation*}
    X_i
    =
    \frac{2\lambda_i\Theta}
    {D_{i-1}\left(1+e^{\Theta S_{i-1}}\right)},
\end{equation*}
where $S_{i-1}$ is the binary \gls{llr}. Thus, \gls{npmf} coincides with the \gls{ppv} encoder with $d_i=2\lambda_i/D_{i-1}$.

\item[(ii)] Let $\Theta\in\{\theta_1,\dots,\theta_U\}$ be equiprobable.
Among encoders $X_i=f_i(\Theta,\mathbf{Y}^{i-1})$ satisfying, for
$j=2,\dots,U$, the equal-spacing constraint
\begin{equation*}
    f_i(\theta_j,\mathbf{Y}^{i-1})
    -
    f_i(\theta_{j-1},\mathbf{Y}^{i-1})
    =
    \frac{\lambda_i}{D_{i-1}}
    \left(
    \theta_j-\theta_{j-1}
    \right),
\end{equation*}
the unique minimizer of $\E[|X_i|^2]$ is
\begin{equation*}
    f_i^\star(\Theta,\mathbf{Y}^{i-1})
    =
    \frac{\lambda_i}{D_{i-1}}
    \left(
    \Theta-\E[\Theta\mid \mathbf{Y}^{i-1}]
    \right).
\end{equation*}
\end{itemize}
\end{proposition}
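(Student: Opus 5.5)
For part (i), I will compute the posterior mean directly for $K=1$ with $\Theta\in\{-1,+1\}$. Write $p_\pm=\Prob(\Theta=\pm1\mid\mathbf{Y}^{i-1})$, so that $S_{i-1}=\log(p_+/p_-)$. This gives $p_+=1/(1+e^{-S_{i-1}})$, and hence $\E[\Theta\mid\mathbf{Y}^{i-1}]=p_+-p_-=\tanh(S_{i-1}/2)$.

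Next I evaluate $\Theta-\E[\Theta\mid\mathbf{Y}^{i-1}]$ in each case. For $\Theta=+1$ it equals $1-\tanh(S_{i-1}/2)=2/(1+e^{S_{i-1}})$. For $\Theta=-1$ it equals $-1-\tanh(S_{i-1}/2)=-2/(1+e^{-S_{i-1}})$. Both cases combine into $2\Theta/(1+e^{\Theta S_{i-1}})$. Multiplying by $\lambda_i/D_{i-1}$ gives the stated form, and comparing with \eqref{eq:ppv} (with $W=\Theta$) identifies $d_i=2\lambda_i/D_{i-1}$.

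The paper's $\eta$ normalization for $K=1$ gives $\eta=1$, so the PAM points are exactly $\pm1$. The LLR used by NPMF is therefore the same as the one in PPV, since both are posterior log-ratios computed from the same observations. One subtlety is that $D_{i-1}$ is a deterministic constant for each round, so the coefficient $d_i$ is time-varying but not data-dependent, which PPV allows.

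For part (ii), I condition on $\mathbf{Y}^{i-1}=\mathbf{y}$. The equal-spacing constraint, applied telescopically, forces
\[
f_i(\theta_j,\mathbf{y}) = \frac{\lambda_i}{D_{i-1}}\theta_j + c(\mathbf{y})
\]
for a single offset $c(\mathbf{y})$ that does not depend on $j$. So the feasible set is exactly $\{\frac{\lambda_i}{D_{i-1}}\Theta + c(\mathbf{Y}^{i-1})\}$ over measurable functions $c$.

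The objective then decomposes by the tower property as $\E\big[\E[(a\Theta+c(\mathbf{Y}^{i-1}))^2\mid\mathbf{Y}^{i-1}]\big]$ with $a=\lambda_i/D_{i-1}$. For each fixed $\mathbf{y}$, the inner expression is a strictly convex quadratic in $c$. It is minimized uniquely at $c^\star(\mathbf{y})=-a\,\E[\Theta\mid\mathbf{Y}^{i-1}=\mathbf{y}]$, with minimum value $a^2\mathrm{Var}(\Theta\mid\mathbf{y})$. Equivalently, this follows from the bias–variance identity $\E[(a\Theta+c)^2\mid\mathbf{y}]=a^2\mathrm{Var}(\Theta\mid\mathbf{y})+(a\E[\Theta\mid\mathbf{y}]+c)^2$. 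Pointwise minimization therefore yields $f_i^\star$.

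Uniqueness holds in the almost-sure sense: any other feasible $c$ incurs an excess cost of $\E[(c-c^\star)^2]>0$ unless $c=c^\star$ a.s. One subtlety is that the conditional law of $\mathbf{Y}^{i-1}$ depends on the earlier encoders, but those are fixed when optimizing round $i$, so the minimization is legitimate.

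The main obstacle is this uniqueness statement together with measurability, which I will handle by stating uniqueness up to $\Prob$-null sets. The remaining steps are routine algebra.
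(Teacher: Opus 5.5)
Your proposal is correct and follows the paper's proof: part (i) is the same direct computation of $\E[\Theta\mid\mathbf{Y}^{i-1}]=\tanh(S_{i-1}/2)$. Part (ii) likewise telescopes the equal-spacing constraint to a common offset and then minimizes a strictly convex quadratic pointwise in $\mathbf{Y}^{i-1}$. The only difference is cosmetic: you condition on $\mathbf{Y}^{i-1}$ via the tower property and the bias--variance identity, whereas the paper changes measure to a reference symbol $\theta_k$ and minimizes $J(\mathbf{Y}^{i-1})=\sum_j |f_i(\theta_j,\mathbf{Y}^{i-1})|^2 e^{S_{i-1}^{(j,k)}}$, which is your conditional second moment divided by $\Prob(\Theta=\theta_k\mid\mathbf{Y}^{i-1})$; your version also states the almost-sure sense of uniqueness explicitly.
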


\begin{proof}
(i) For $K=1$, $\Theta \in \{-1,+1\}$ and $S_{i-1}$ denotes the binary \gls{llr}. The posterior mean in \eqref{eq:mmse_exact} is given by
\begin{equation*}
    \mathbb{E}[\Theta \mid \mathbf{Y}^{i-1}]
= \frac{e^{S_{i-1}} - 1}{e^{S_{i-1}} + 1}.
\end{equation*}
Substituting into $X_i = \frac{\lambda_i}{D_{i-1}}(\Theta - \widehat{\Theta}_{i-1})$ gives
\begin{align*}
    X_i=
\begin{cases}
\dfrac{2\lambda_i}{D_{i-1}(1+e^{S_{i-1}})}, & \Theta=+1, \\
-\dfrac{2\lambda_i}{D_{i-1}(1+e^{-S_{i-1}})}, & \Theta=-1.
\end{cases}
\end{align*}
Equivalently, letting $d_i = 2\lambda_i/D_{i-1}$ gives
\begin{equation*}
    X_i=\frac{d_i\Theta}{1+e^{\Theta S_{i-1}}},
\end{equation*}
which coincides with the \gls{ppv} encoder in \eqref{eq:ppv}.

(ii) Since $\Theta$ is equiprobable,
\begin{equation*}
    \E[|X_i|^2]
    =
    \frac{1}{U}
    \sum_{j=1}^U
    \E\!\left[
    |f_i(\theta_j,\mathbf{Y}^{i-1})|^2
    \mid
    \Theta=\theta_j
    \right].
\end{equation*}
Fix a reference symbol $\theta_k$. By a change of measure,
\begin{equation*}
\begin{aligned}
&\E\!\left[
|f_i(\theta_j,\mathbf{Y}^{i-1})|^2
\mid \Theta=\theta_j
\right] \\
&\qquad =
\E\!\left[
|f_i(\theta_j,\mathbf{Y}^{i-1})|^2
e^{S_{i-1}^{(j,k)}}
\mid \Theta=\theta_k
\right],
\end{aligned}
\end{equation*}
where
\begin{equation*}
    S_{i-1}^{(j,k)}
    =
    \log
    \frac{
    \Prob(\Theta=\theta_j\mid \mathbf{Y}^{i-1})
    }{
    \Prob(\Theta=\theta_k\mid \mathbf{Y}^{i-1})
    }.
\end{equation*}
Thus, for each realization of $\mathbf{Y}^{i-1}$, minimizing $\E[|X_i|^2]$ is equivalent to minimizing
\begin{equation}
\label{eq:minimum}
    J(\mathbf{Y}^{i-1})
    :=
    \sum_{j=1}^U
    |f_i(\theta_j,\mathbf{Y}^{i-1})|^2
    e^{S_{i-1}^{(j,k)}}.
\end{equation}
Iterating the equal-spacing constraint gives
\begin{equation*}
    f_i(\theta_j,\mathbf{Y}^{i-1})
    =
    \frac{\lambda_i}{D_{i-1}}\theta_j
    -
    h_i(\mathbf{Y}^{i-1}),
\end{equation*}
where $h_i(\mathbf{Y}^{i-1})$ is common to all constellation points. Substituting this form into \eqref{eq:minimum}, 
\begin{equation*}
    J(\mathbf{Y}^{i-1})
    =
    \sum_{j=1}^{U}
    \left(
    \frac{\lambda_i}{D_{i-1}}\,\theta_j
    -
    h_i(\mathbf{Y}^{i-1})
    \right)^{\!2}
    e^{S_{i-1}^{(j,k)}},
\end{equation*}
which is strictly convex in $h_i$. Setting $\partial J/\partial h_i = 0$ gives the unique minimizer
\begin{equation*}
    h_i^\star
    =
    \frac{\lambda_i}{D_{i-1}}\frac{
    \sum_{j=1}^U
    \theta_j e^{S_{i-1}^{(j,k)}}
    }{
    \sum_{j=1}^U
    e^{S_{i-1}^{(j,k)}}
    }
    =
    \frac{\lambda_i}{D_{i-1}}
    \E[\Theta\mid \mathbf{Y}^{i-1}],
\end{equation*}
which does not depend on the choice of reference $\theta_k$. We conclude
\begin{equation*}
    f_i^\star(\Theta,\mathbf{Y}^{i-1})
    =
    \frac{\lambda_i}{D_{i-1}}
    \left(
    \Theta-\E[\Theta\mid \mathbf{Y}^{i-1}]
    \right),
\end{equation*}
which is exactly the \gls{npmf} encoder.
\end{proof}

Although \gls{npmf} and \gls{ppv} are equivalent for $K=1$, their extensions to $K \ge 2$ differ fundamentally. To transmit a $K$-bit message over $T$ channel uses, the single-bit \gls{ppv} scheme can be extended through time multiplexing~\cite{PPV}, allocating roughly $T/K$ channel uses to each bit and refining each bit independently. In contrast, the proposed \gls{npmf} scheme maps the entire message to a single \gls{pam} symbol $\Theta$ and uses all $T$ rounds to iteratively refine it via the posterior mean. For a fixed code rate $R=K/T$, increasing $K$ increases $T$, allowing
\gls{npmf} to refine the entire message over a longer blocklength.
At high forward \gls{snr}, this leads to faster \gls{bler} decay than
the single-bit refinement strategy of the time-multiplexed \gls{ppv}
scheme, as shown in Fig.~\ref{fig:npmf_fixed_performance}.

\subsubsection{SK and {\pb}} \label{sec:skconnection}

{The \gls{sk} scheme uses a linear estimation-error refinement,
while \gls{npmf} uses the exact posterior mean of the discrete
\gls{pam} symbol, which is generally a nonlinear function of the
past channel observations.
Fig.~\ref{fig:scatter} shows the second-round encoder output $X_2$
versus the first-round noise $N_1$ for $K=3$ with an 8-ary
\gls{pam} constellation. For Lightcode, the mapping is approximately
linear for inner symbols, where noise in either direction can cause
confusion with neighboring symbols. At the boundary symbols
$\theta_1$ and $\theta_8$, favorable noise, e.g., $N_1>0$ for
$\theta_1$ or $N_1<0$ for $\theta_8$, pushes the observation further
into the correct decision region, so little correction is needed.
The proposed \gls{npmf} shows similar nonlinear behavior, but it
follows directly from the posterior mean. For example, when
$\Theta=\theta_1$, negative noise moves $Y_1$ further left, causing
the posterior to concentrate on $\theta_1$. Hence,
$\E[\Theta\mid Y_1]\approx\theta_1$ and the correction $X_2$
becomes small. When the noise moves $Y_1$ toward neighboring symbols,
the posterior remains uncertain and \gls{npmf} sends a stronger
correction. Thus, \gls{npmf} allocates more power to difficult
observations than to already reliable ones.}

\begin{figure}[!t]
    \centering
    \includegraphics[width=\columnwidth]{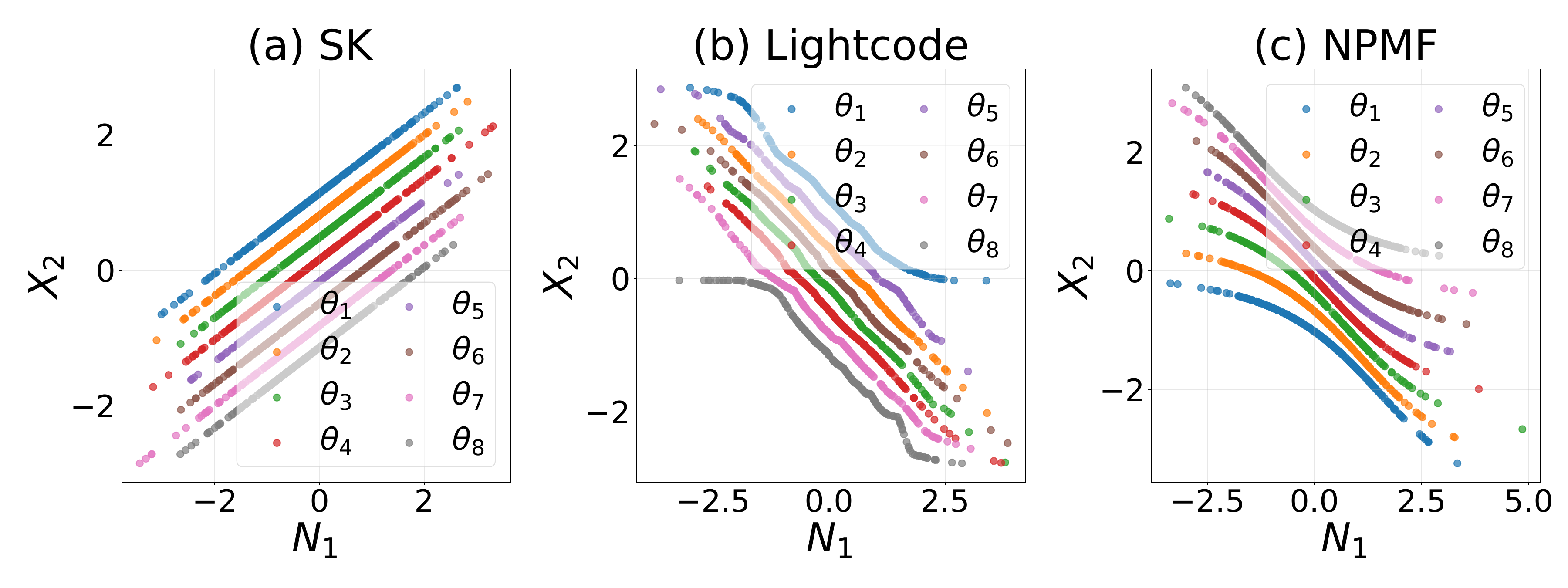}
    \caption{Second-round encoder output $X_2$ versus the first-round forward noise $N_1$ for $K=3$ (8-ary \gls{pam}). The nonlinear structure of \gls{npmf} and Lightcode contrasts with the linear behavior of \gls{sk}.}
    \label{fig:scatter}
\end{figure}
 
POWERBLAST is a variant of the \gls{sk} scheme. We also consider a modified \gls{npmf} scheme in which the final round is replaced by transmitting the PAM index difference, as in POWERBLAST. We refer to this variant as NPMF-PB.

\subsubsection{Posterior Matching}\label{sec:relationwithPM}
While both the proposed \gls{npmf} scheme and posterior matching use the posterior probability of the message, NPMF does not belong to the posterior matching class. Posterior matching encodes via a posterior c.d.f transformation as in \eqref{eq:pm_enc}, which guarantees that the transmitted symbol is statistically independent of $\mathbf{Y}^{i-1}$. In contrast, NPMF transmits the scaled posterior mean error $\Theta - \mathbb{E}[\Theta|\mathbf{Y}^{i-1}]$, which by the orthogonality principle of MMSE estimation is uncorrelated with $\mathbf{Y}^{i-1}$, but not necessarily statistically independent. Since $\Theta$ is uniformly distributed over a discrete PAM constellation, the posterior mean $\mathbb{E}[\Theta|\mathbf{Y}^{i-1}]$ is a nonlinear function of $\mathbf{Y}^{i-1}$, and uncorrelatedness does not imply statistical independence for non-Gaussian distributions. Hence, NPMF does not satisfy the statistical independence requirement of the posterior matching principle. {For the \gls{awgnf} channel, posterior matching reduces to the \gls{sk} scheme, which \gls{npmf} outperforms in the considered finite-blocklength simulations.}

\subsection{Simulation Results}
We evaluate the proposed \gls{npmf} scheme under both per-symbol and per-codeword average power constraints. The decoder computes the posterior probabilities using the softmax expression in \eqref{eq:absolute_softmax}, and the final posterior distribution is used in the \gls{cce} loss for optimization. The learnable parameters are optimized using AdamW
with a batch size of $10^5$, an initial learning rate of $10^{-3}$,
and weight decay of $10^{-2}$. Results are averaged over three
random seeds.\footnote{Codes are available at
\url{https://github.com/zyy-cc/npmf}.} Since the normalization factor $D_{i-1}$ depends on the power allocation across rounds, it is computed per batch during training. During testing, we fix $D_{i-1}$ using empirical estimates obtained from $10^6$ independent samples. For \gls{bler} evaluation, each operating point is simulated using
direct Monte Carlo simulation until 100 errors are observed.

As shown in Fig. \ref{fig:npmf_noiseless_performance}, we evaluate the proposed \gls{npmf} scheme against analytical baselines and the state-of-the-art Lightcode under noiseless feedback for rates $K/T = 3/6$ and $K/T = 3/9$, considering both per-symbol and per-codeword average power constraints. The \gls{npmf} under the per-codeword average power constraint consistently achieves the lowest BLER across the entire SNR range, outperforming all competing methods. The simulation results closely align with the analytical union bound in \eqref{eq:union_bound}. For $R = 3/6$, the hybrid NPMF-PB variant, which incorporates POWERBLAST principles, performs worse than the pure NPMF, highlighting the benefit of the posterior mean. For the lower rate $R = 3/9$, \gls{npmf} achieves strong reliability even at low SNR of $-1$ dB, outperforming both POWERBLAST and Lightcode. Moreover, learned codes such as Lightcode often reach a performance limit at higher SNR due to the scarcity of error events for gradient-based refinement, while the proposed \gls{npmf} remains effective under {a per-symbol average power constraint}. {At $K/T=3/9$, \gls{npmf} uses only 8 power-allocation parameters, compared with $7.3\times10^3$ parameters in Lightcode~\cite{ankireddy2025lightcode}.}

\begin{figure}[!t]
    \centering
    \includegraphics[width=0.9\columnwidth]{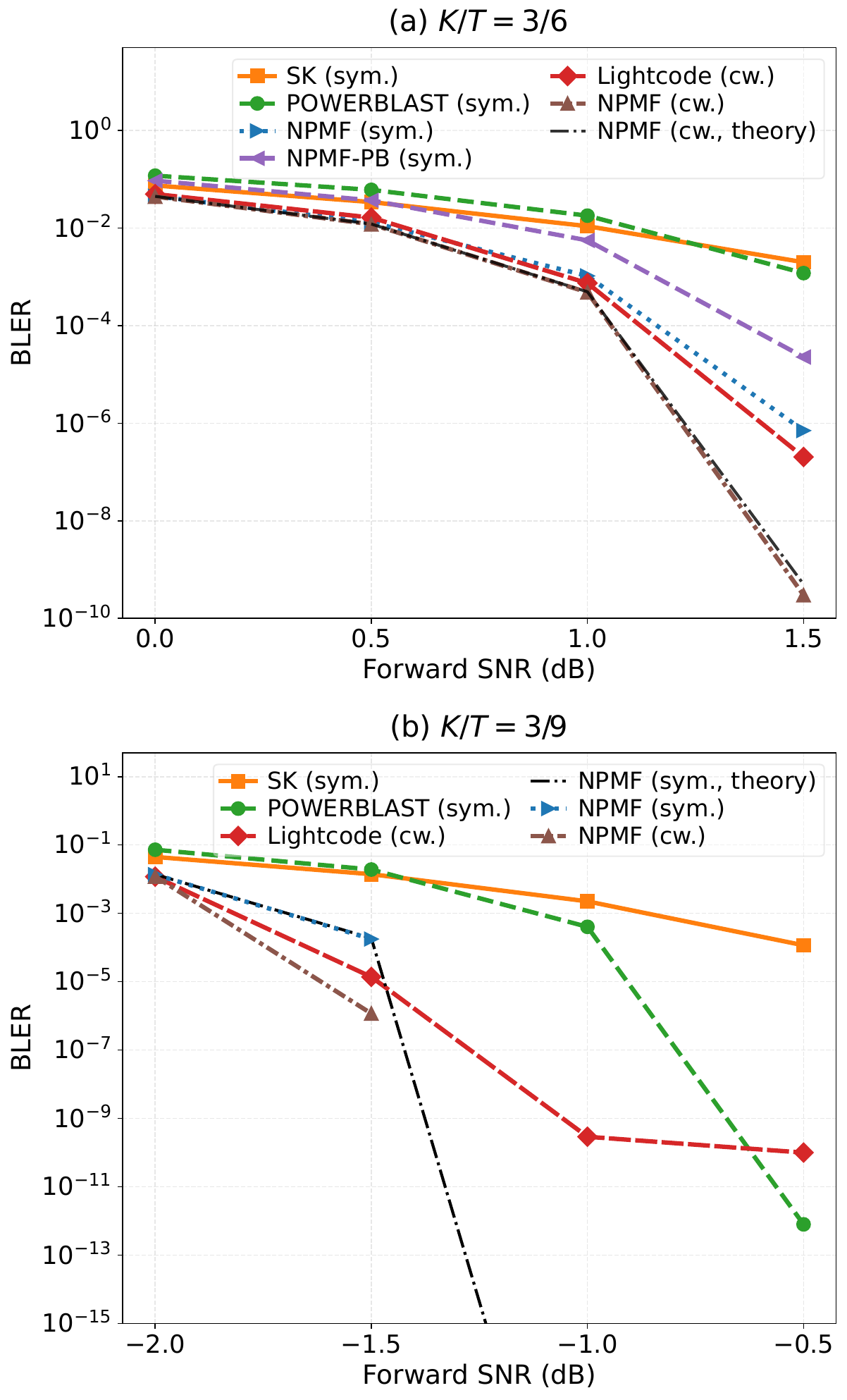}
    \caption{Performance of existing feedback codes for $K/T = 3/6$ and $K/T = 3/9$ under per-symbol (``sym.'') and per-codeword (``cw.'') average power constraints.}
    \label{fig:npmf_noiseless_performance}
\end{figure}

\begin{figure}[!t]
    \centering
    \includegraphics[width=0.9\columnwidth]{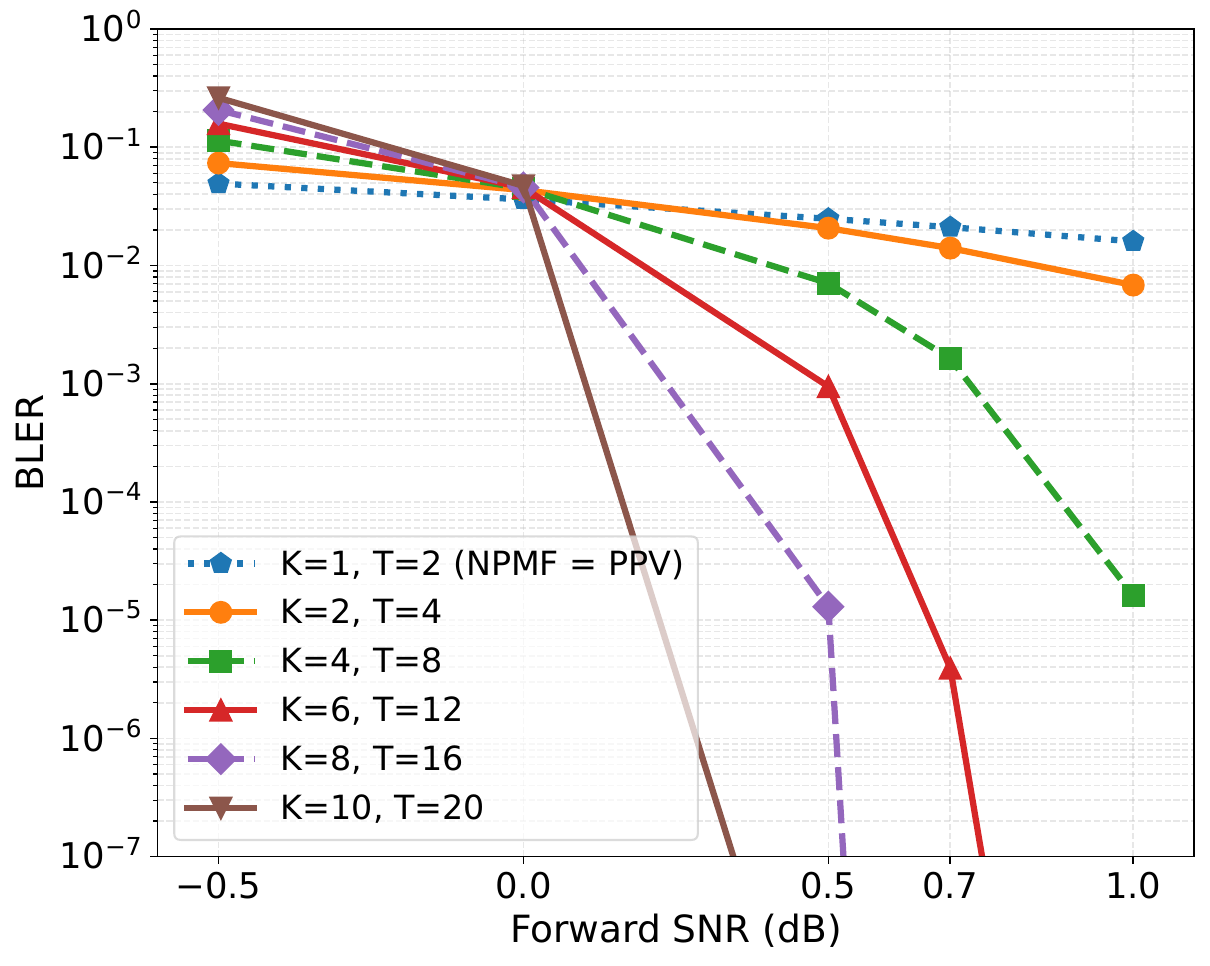}
    \caption{Performance of \gls{npmf} at a fixed code rate of $1/2$ under a per-symbol average power constraint}
    \label{fig:npmf_fixed_performance}
\end{figure}

We evaluate \gls{npmf} at a fixed rate $R=1/2$ under a per-symbol
average power constraint in Fig.~\ref{fig:npmf_fixed_performance}. Since the real-valued AWGN capacity equals $1/2$ at
$\mathrm{SNR}_f=0$ dB, reliable communication is not expected below
this threshold. At higher SNRs, larger $K$ benefits from the longer
blocklength $T$, leading to improved reliability. Unlike Lightcode, \gls{npmf} can support larger $K$ without a rapidly growing number of learned parameters.






\section{Noisy Feedback: Active Posterior-Mean Feedback Codes}\label{sec:apmf}
In this section, we extend the proposed posterior-mean feedback framework to noisy feedback. Although \gls{npmf} performs well with noiseless feedback, its performance degrades when feedback noise is present. The main difficulty is that the transmitter and receiver no longer maintain the same posterior state. Since the encoder relies on an accurate posterior mean, even small feedback noise can create a mismatch between the transmitter and receiver estimates. This mismatch is then amplified over multiple rounds and can lead to poor reliability.

To address this issue, we propose an \gls{apmf} scheme, where the receiver actively sends the posterior mean to the transmitter. We first introduce a \gls{pam}-based \gls{apmf} code. This scheme improves robustness under noisy feedback, but its performance degrades as $K$ increases because the $2^K$-ary \gls{pam} constellation points become more densely packed in one dimension. As a result, the code becomes more sensitive to feedback noise, even though more transmission rounds are available at a fixed rate. To overcome this limitation, we further propose a projection-based \gls{apmf} code. Instead of using classical lattice codes or constellation shaping to design complex high-dimensional constellations, the proposed scheme keeps a simple $K$-dimensional binary representation and learns a sequence of scalar projections. By using round-dependent learned projection directions, the encoder
forms different one-dimensional constellations across transmission
rounds to emphasize different message bits, thus mitigating the
one-dimensional packing limitation of \gls{pam} and improving
reliability under noisy active feedback.

\subsection{\gls{pam}-Based \gls{apmf} Code}
We first introduce the \gls{pam}-based \gls{apmf} code. As in \gls{npmf}, the message $\mathbf{M}\in\mathbb{F}_2^K$ is mapped to a \gls{pam} symbol $\Theta\in\{\theta_1,\dots,\theta_U\}$ defined in \eqref{eq:pam}. After each round, the receiver actively feeds back its posterior mean of $\Theta$, and the transmitter uses the noisy feedback observation to generate the next residual transmission.

\begin{definition}[\gls{pam}-Based Active Posterior-Mean Feedback Code]
An $(K,T)$ \gls{pam}-based \gls{apmf} code transmits a $K$-bit message over $T$ rounds. The first forward transmission follows \gls{npmf}, $X_1 = \frac{\lambda_1}{D_0} \Theta$. For $i=1,\dots,T-1$, after observing $\mathbf{Y}^i$, the receiver computes the posterior mean $\widehat{\Theta}_i^{\rm RX}
    =
    \mathbb{E}[\Theta\mid \mathbf{Y}^i]$ and sends the normalized feedback signal
\begin{equation}
    \widetilde{X}_i
    =
    \frac{\omega_i}{\nu_i}
    \widehat{\Theta}_i^{\rm RX},
    \qquad
    \nu_i
    =
    \sqrt{\mathbb{E}\!\left[(\widehat{\Theta}_i^{\rm RX})^2\right]}. 
\end{equation}
The transmitter observes $\widetilde{Y}_i=\widetilde{X}_i+\widetilde{N}_i$ and forms
\begin{equation*}
    \widehat{\Theta}_i^{\rm TX}
    =
    \frac{\widetilde{Y}_i}{\omega_i/\nu_i}.
\end{equation*}
For $i=2,\dots,T$, the transmitter sends
\begin{equation*}
    X_i
    =
    \frac{\lambda_i}{D_{i-1}}
    \left(
    \Theta-\widehat{\Theta}_{i-1}^{\rm TX}
    \right),
    \quad
    D_{i-1}
    =
    \sqrt{
    \mathbb{E}
    \left[
    \left(
    \Theta-\widehat{\Theta}_{i-1}^{\rm TX}
    \right)^2
    \right]
    }.
\end{equation*}
Here, $\lambda_i$ and $\omega_i$ denote the forward and feedback power allocation coefficients, respectively. Under the per-codeword average power constraint, they satisfy
\[
    \sum_{i=1}^{T}\lambda_i^2 = T,
    \qquad
    \sum_{i=1}^{T-1}\omega_i^2 = T-1.
\]
\end{definition}

The receiver updates the accumulated log-likelihood used for \gls{map} decoding. For the first round, we set $\widehat{\Theta}_0^{\rm RX}=0$ and $\sigma_{{\rm eff},1}^2=\sigma_f^2$. For $i\geq 2$, the transmitter uses the noisy estimate
\begin{equation*}
    \widehat{\Theta}_{i-1}^{\rm TX}
    =
    \widehat{\Theta}_{i-1}^{\rm RX}
    +
    \frac{\widetilde{N}_{i-1}}{\omega_{i-1}/\nu_{i-1}} .
\end{equation*}
Substituting this into the forward transmission gives
\begin{equation*}
    Y_i
    =
    \frac{\lambda_i}{D_{i-1}}
    \left(
    \Theta-\widehat{\Theta}_{i-1}^{\rm RX}
    \right)
    +
    N_i
    -
    \frac{\lambda_i/D_{i-1}}{\omega_{i-1}/\nu_{i-1}}
    \widetilde{N}_{i-1}.
\end{equation*}
Thus, for $i\geq 2$, the receiver uses the effective variance
\begin{equation}
\label{eq:sigma_eff_apmf}
    \sigma_{{\rm eff},i}^2
    =
    \sigma_f^2
    +
    \left(
    \frac{\lambda_i/D_{i-1}}{\omega_{i-1}/\nu_{i-1}}
    \right)^2
    \sigma_{fb}^2.
\end{equation} 

For each hypothesis $\theta_j$, the accumulated log-likelihood at the receiver is updated as
\begin{equation}
\label{eq:apmf_ll_update}
    \ell_i(\theta_j)
    =
    \ell_{i-1}(\theta_j)
    -
    \frac{
    \left(
    Y_i
    -
    \frac{\lambda_i}{D_{i-1}}
    \left(
    \theta_j-\widehat{\Theta}_{i-1}^{\rm RX}
    \right)
    \right)^2
    }
    {2\sigma_{{\rm eff},i}^2}.
\end{equation}
The posterior distribution is obtained by applying the softmax function to the accumulated log-likelihoods. After $T$ rounds, the receiver applies the \gls{map} rule
\begin{equation*}
    \widehat{\Theta}
    =
    \arg\max_{\theta_j\in\{\theta_1,\dots,\theta_U\}}
    \mathbb{P}(\Theta=\theta_j\mid \mathbf{Y}^T),
\end{equation*}
which is then mapped to the decoded message $\widehat{\mathbf{M}}$.

The error analysis follows the same steps as in the
noiseless-feedback \gls{npmf} analysis, with the forward noise variance replaced by the round-dependent effective variance $\sigma_{{\rm eff},i}^2$. The detailed derivation is given in Appendix~\ref{app:apmf_error_analysis}. The resulting probability of error is bounded
\begin{equation}
    \BLER_{\text{PAM-APMF}}
    \leq
    \frac{2}{U}
    \sum_{m=1}^{U-1}
    (U-m)
    Q\!\left(
    m\sqrt{\widetilde{\Lambda}_T}
    \right),
\end{equation}
where
\begin{equation}
    \widetilde{\Lambda}_T
    =
    \eta^2
    \sum_{i=1}^{T}
    \frac{
    \left(\lambda_i/D_{i-1}\right)^2
    }
    {\sigma_{{\rm eff},i}^2}.
\end{equation}

The bound highlights a limitation of the \gls{pam}-based design. The accumulated effective distance is scaled by the \gls{pam} spacing parameter $\eta=\sqrt{3/(2^{2K}-1)}$, which decreases as $K$ increases. Hence, at a fixed rate, the $2^K$ \gls{pam} constellation becomes increasingly dense in one dimension and more sensitive to feedback-induced mismatch. This motivates a projection-based \gls{apmf} code, which represents messages using higher-dimensional binary vectors and refines learned scalar projections instead of a single \gls{pam} coordinate. Unlike one-dimensional \gls{pam}, where all $2^K$ messages are placed on a fixed constellation, the projection-based scheme allows different projection directions to be used in different transmission rounds. As a result, the distances between message pairs can change over time, allowing the encoder to gradually separate different messages instead of relying on a single fixed constellation, thus reducing the packing loss of one-dimensional signaling.

\subsection{Projection-Based \gls{apmf} Code}
To avoid the one-dimensional packing limitation of \gls{pam}, we keep the vector representation of the message and use learned scalar projections for posterior-mean refinement.

\begin{definition}[Projection-Based Active Posterior-Mean Feedback Code]
The message $\mathbf{M}\in\F_2^K$ is mapped to its BPSK representation
\begin{equation*}
    \mathbf{B}
    =
    2\mathbf{M}-\mathbf{1}
    \in \{-1,+1\}^K .
\end{equation*}
For each round $i$, let $\mathbf{a}_i\in\R^K$ be a projection vector satisfying $\|\mathbf{a}_i\|_2=1$. The corresponding projected feature is
\begin{equation}
    \Phi_i
    =
    \mathbf{a}_i^\top \mathbf{B}.
\end{equation}
The projection vectors $\{\mathbf{a}_i\}_{i=1}^T$ are learned code parameters fixed during transmission and known to both the transmitter and receiver. For each $\mathbf{m}\in\F_2^K$, define
\begin{equation*}
    \mathbf{b}(\mathbf{m})
    =
    2\mathbf{m}-\mathbf{1},
    \qquad
    \phi_i(\mathbf{m})
    =
    \mathbf{a}_i^\top \mathbf{b}(\mathbf{m}).
\end{equation*}

In the first round, the transmitter sends $X_1
    =
    \frac{\lambda_1}{D_0}\Phi_1$ with $D_0
    =
    \sqrt{\E[\Phi_1^2]}$. For $i=2,\dots,T$, the feedback encoder computes the posterior probabilities based on $\mathbf{Y}^{i-1}$ and then evaluates the posterior mean of the next projected feature,
\begin{equation*}
    \widehat{\Phi}_{i}^{\rm RX}
    =
    \E[\Phi_i\mid \mathbf{Y}^{i-1}]
    =
    \sum_{\mathbf{m}\in\F_2^K}
    \phi_i(\mathbf{m})
    \Prob(\mathbf{M}=\mathbf{m}\mid \mathbf{Y}^{i-1}),
\end{equation*}
and sends the normalized feedback signal
\begin{equation*}
    \widetilde{X}_{i-1}
    =
    \frac{\omega_{i-1}}{\nu_{i-1}}
    \widehat{\Phi}_{i}^{\rm RX},
    \qquad
    \nu_{i-1}
    =
    \sqrt{\E\!\left[
    \left(\widehat{\Phi}_{i}^{\rm RX}\right)^2
    \right]} .
\end{equation*}
The transmitter forms the estimate $\widehat{\Phi}_{i}^{\rm TX}
    =
    \frac{\widetilde{Y}_{i-1}}{\omega_{i-1}/\nu_{i-1}}$ and then sends the normalized residual
\begin{equation*}
    X_i
    =
    \frac{\lambda_i}{D_{i-1}}
    \left(
    \Phi_i-\widehat{\Phi}_{i}^{\rm TX}
    \right),
    \quad
    D_{i-1}
    =
    \sqrt{
    \E\!\left[
    \left(
    \Phi_i-\widehat{\Phi}_{i}^{\rm TX}
    \right)^2
    \right]
    } .
\end{equation*}
Here, $\lambda_i$ and $\omega_i$ denote the forward and feedback power allocation coefficients, respectively.
\end{definition}

We set $\widehat{\Phi}_{1}^{\rm RX}=0$ and $\sigma_{{\rm eff},1}^2=\sigma_f^2$. For $i\geq 2$, as in the \gls{pam}-based \gls{apmf} code, the effective variance is 
\begin{equation*}
\sigma_{{\rm eff},i}^2
=
\sigma_f^2
+
\left(
\frac{\lambda_i/D_{i-1}}
{\omega_{i-1}/\nu_{i-1}}
\right)^2
\sigma_{fb}^2.
\end{equation*}

For each hypothesis $\mathbf{m}\in\F_2^K$, the accumulated log-likelihood at the receiver is updated as
\begin{equation*}
\label{eq:projection_apmf_ll_update}
    \ell_i(\mathbf{m})
    =
    \ell_{i-1}(\mathbf{m})
    -
    \frac{
    \left(
    Y_i
    -
    \frac{\lambda_i}{D_{i-1}}
    \left(
    \phi_i(\mathbf{m})-\widehat{\Phi}_{i}^{\rm RX}
    \right)
    \right)^2
    }
    {2\sigma_{{\rm eff},i}^2}.
\end{equation*}
After $T$ rounds, the receiver applies the \gls{map} decoder $\widehat{\mathbf{M}}
    =
    \arg\max_{\mathbf{m}\in\F_2^K}
    \ell_T(\mathbf{m})$.

For fixed projection vectors $\{\mathbf{a}_i\}_{i=1}^T$, the pairwise distance is no longer determined by the one-dimensional \gls{pam} spacing. For two messages $\mathbf{m},\mathbf{m}'\in\F_2^K$, the round-$i$ distance is
\begin{equation*}
    \Delta_i(\mathbf{m},\mathbf{m}')
    =
    \mathbf{a}_i^\top
    \left(
    \mathbf{b}(\mathbf{m})-\mathbf{b}(\mathbf{m}')
    \right).
\end{equation*}
As derived in Appendix~\ref{app:projection_apmf_error_analysis}, the resulting union bound is
\begin{equation*}
    \BLER_{\text{Proj-APMF}}
    \leq
    \frac{1}{2^K}
    \sum_{\mathbf{m}\in\F_2^K}
    \sum_{\substack{\mathbf{m}'\in\F_2^K\\ \mathbf{m}'\neq \mathbf{m}}}
    Q\!\left(
    \sqrt{
    \widetilde{\Lambda}_T^{\rm Proj}(\mathbf{m},\mathbf{m}')
    }
    \right).
\end{equation*}
where \begin{equation*}
    \widetilde{\Lambda}_T^{\rm Proj}(\mathbf{m},\mathbf{m}')
    =
    \frac{1}{4}
    \sum_{i=1}^{T}
    \frac{
    \left(\lambda_i/D_{i-1}\right)^2
    \Delta_i^2(\mathbf{m},\mathbf{m}')
    }
    {\sigma_{{\rm eff},i}^2}.
\end{equation*}

Unlike the \gls{pam}-based bound, this bound cannot be grouped only by $|j-k|$, since the pairwise distance depends on both the message pair and the projection vector in each round. This gives the projection-based \gls{apmf} code additional flexibility beyond the fixed one-dimensional \gls{pam} spacing. In fact, the \gls{pam}-based \gls{apmf} code is a special case of the projection-based scheme. If every projection vector is fixed as
\begin{equation*}
    \mathbf{a}_i
    =
    \frac{
    [2^{K-1},2^{K-2},\dots,1]^\top
    }{
    \sqrt{\sum_{j=0}^{K-1}4^j}
    },
    \qquad i=1,\dots,T,
\end{equation*}
then $\mathbf{a}_i^\top\mathbf{B}$ exactly recovers the $2^K$-\gls{pam} constellation. Therefore, the \gls{pam}-based \gls{apmf} code is recovered by using this fixed projection direction in every round.

\subsection{Simulation Results}
For learned projection-based \gls{apmf}, the projection vectors are randomly initialized and jointly optimized with the power allocation coefficients. Under the per-codeword average power constraint, the model has only $TK+(2T-1)$ trainable parameters from the projection vectors and power allocation coefficients. Each setting is trained for $5\times 10^4$ batches and evaluated over three independent random seeds. After training, the normalization constants $D_i$ and $\nu_i$ are calibrated using a large independent batch and then fixed during testing.

\begin{figure}[!t]
    \centering
    \includegraphics[width=0.9\columnwidth]{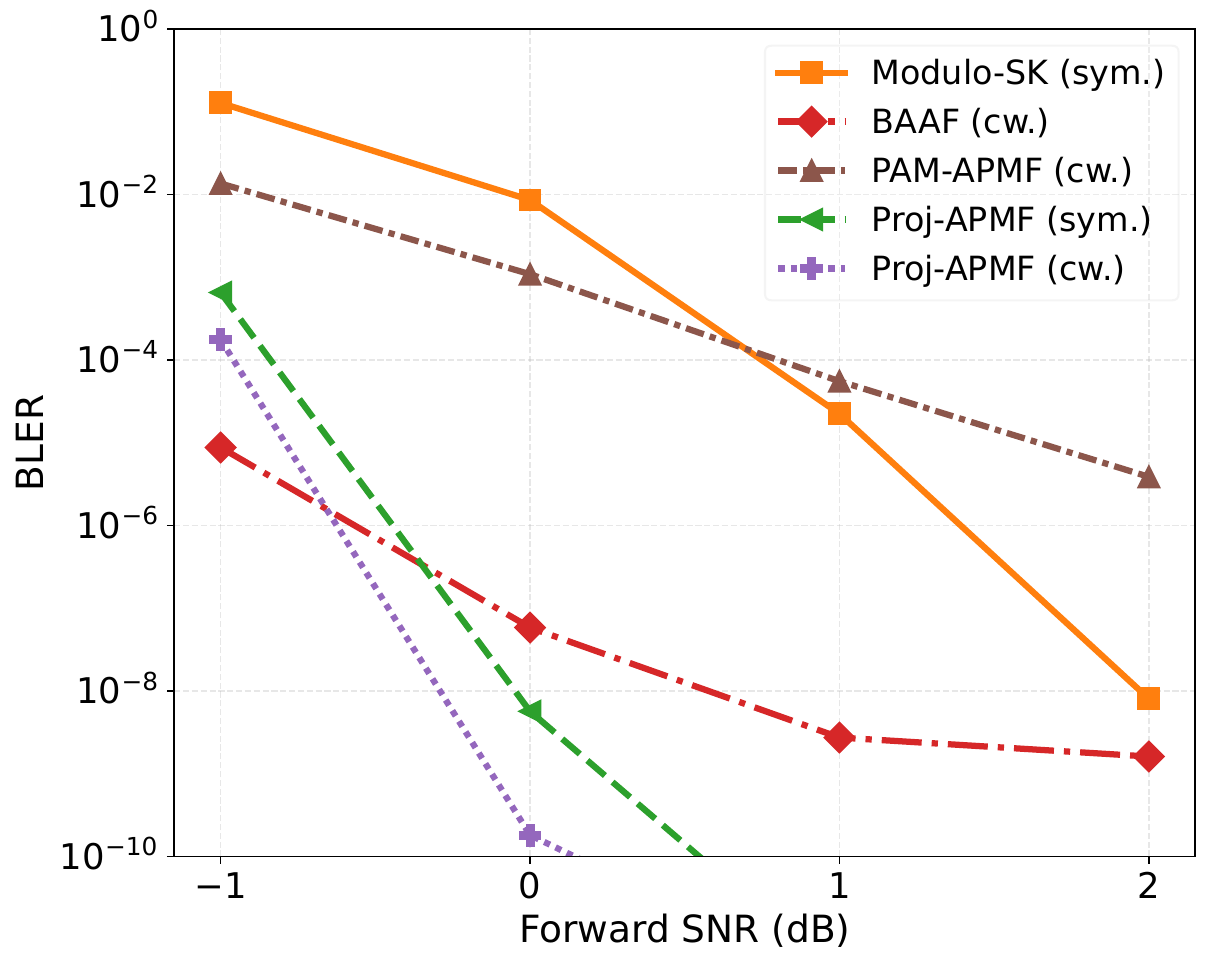}
    \caption{Performance of existing active feedback schemes for $K/T = 3/9$ with feedback SNR fixed at 20 dB.}
    \label{fig:apmf_noisy_performance}
\end{figure}
  
As shown in Fig.~\ref{fig:apmf_noisy_performance}, we compare existing active feedback schemes for $K=3$ and $T=9$ with the feedback SNR fixed at $20$ dB. The projection-based \gls{apmf} under the per-codeword average power constraint achieves the best BLER for $\mathrm{SNR}_f \geq 0$ dB, followed by its per-symbol average-power version. Both outperform BAAF in this region with far fewer learned parameters and an interpretable posterior-mean structure. At $\mathrm{SNR}_f=-1$ dB, BAAF performs best, while the projection-based \gls{apmf} remains competitive. The PAM-based \gls{apmf} is less effective than the projection-based design, especially under the per-codeword average power constraint, but it still outperforms Modulo-SK at low SNR. As the forward SNR increases, Modulo-SK improves rapidly, but it is still outperformed by the proposed projection-based \gls{apmf} schemes for the same message size.


To further study the effect of feedback, Fig.~\ref{fig:activeFB}
fixes the forward SNR at $0$ dB and varies the feedback SNR.
All schemes are evaluated at the same code rate, $R=K/T=1/3$.
For $K=1$, both ActiveFB and the projection-based \gls{apmf}
outperform Modulo-SK, with ActiveFB achieving the lowest
\gls{bler}. However, extending ActiveFB beyond $K=2$ is not
straightforward~\cite{activeFB}, {as its recurrent architecture is designed for very short messages,} while the projection-based
\gls{apmf} naturally supports larger message lengths. The projection-based \gls{apmf} also improves significantly as
the feedback SNR increases, with $K=3$ achieving the lowest
\gls{bler} at high feedback SNR.

\begin{figure}[!t]
    \centering
    \includegraphics[width=0.9\columnwidth]{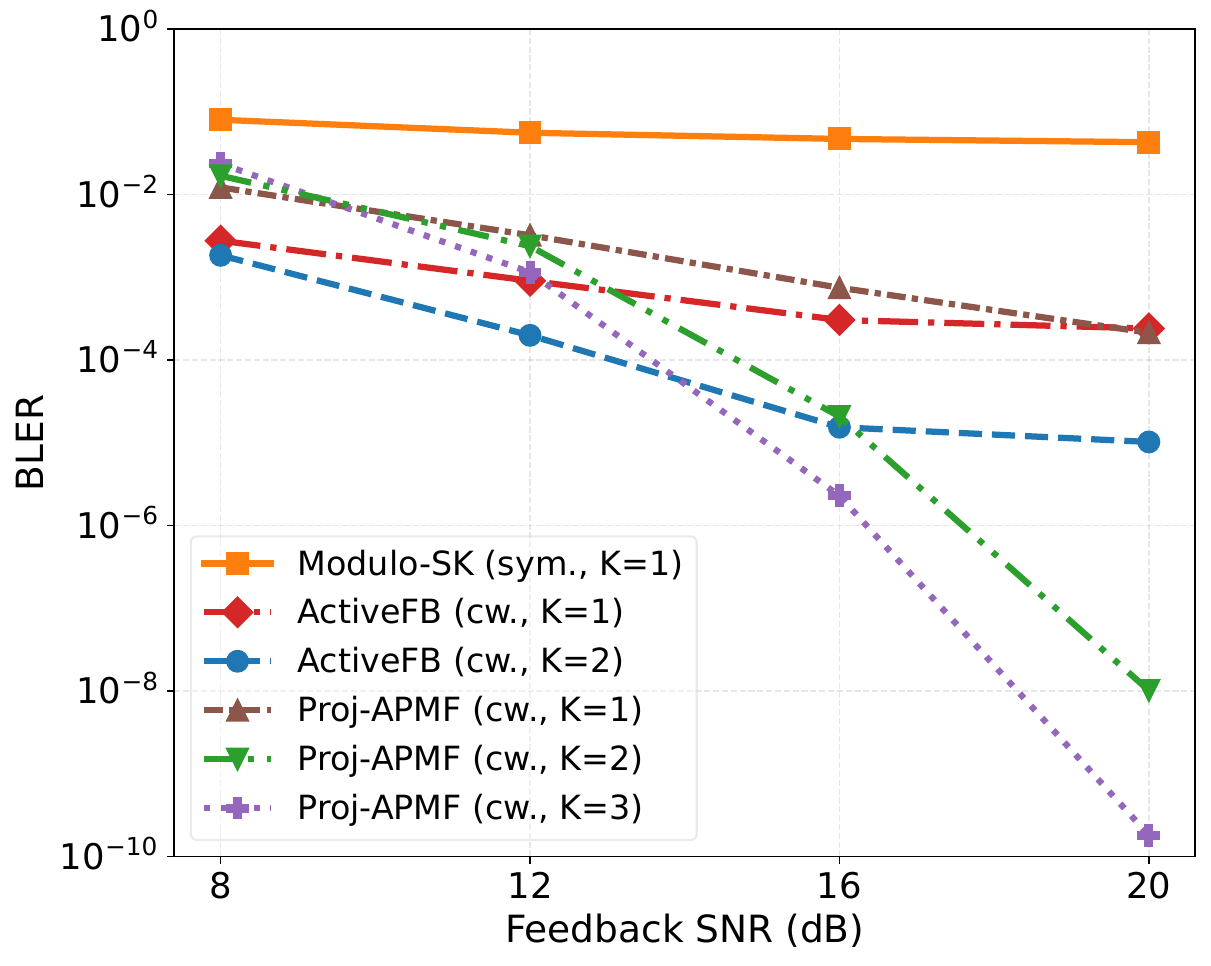}
    \caption{{Performance of existing active feedback schemes with $R=1/3$ and forward SNR fixed at $0$ dB.}}
    \label{fig:activeFB}
\end{figure}


Finally, we examine the effect of the message length $K$ in Table~\ref{tab:k_effect}. We fix the code rate at $R=1/3$, the forward SNR at $0$ dB, and the feedback SNR at $20$ dB. 
For $K=1$, the \gls{pam}-based and projection-based \gls{apmf} schemes are equivalent, since the projection reduces to a binary constellation. As $K$ increases, the \gls{pam}-based scheme becomes less reliable because more messages are packed into a one-dimensional constellation, making the posterior update more sensitive to feedback noise, especially for $K=3$ and $K=4$. In contrast, the projection-based scheme refines the message over multiple dimensions and remains robust as $K$ increases, achieving BLER below $10^{-8}$ for $K\geq 2$. These results show that the projection-based design is essential for extending \gls{apmf} to larger message sizes under noisy feedback.

\begin{table}[!t]
\centering
\caption{BLER of \gls{pam}-based and projection-based \gls{apmf} for different $(K,T)$}
\label{tab:k_effect}
\begin{tabular}{c|c|c}
\hline
$(K,T)$ & PAM-based \gls{apmf} & Projection-based \gls{apmf} \\
\hline
$(1,3)$  & $2.14\times10^{-4}$ & $2.18\times10^{-4}$ \\
$(2,6)$  & $2.86\times10^{-6}$ & $3.25\times10^{-9}$ \\
$(3,9)$  & $1.07\times10^{-3}$ & $1.82\times10^{-10}$ \\
$(4,12)$ & $3.65\times10^{-2}$ & $3.94\times10^{-11}$ \\
\hline
\end{tabular}
\end{table}

\begin{figure}[!t]
    \centering
    \includegraphics[width=0.95\columnwidth]{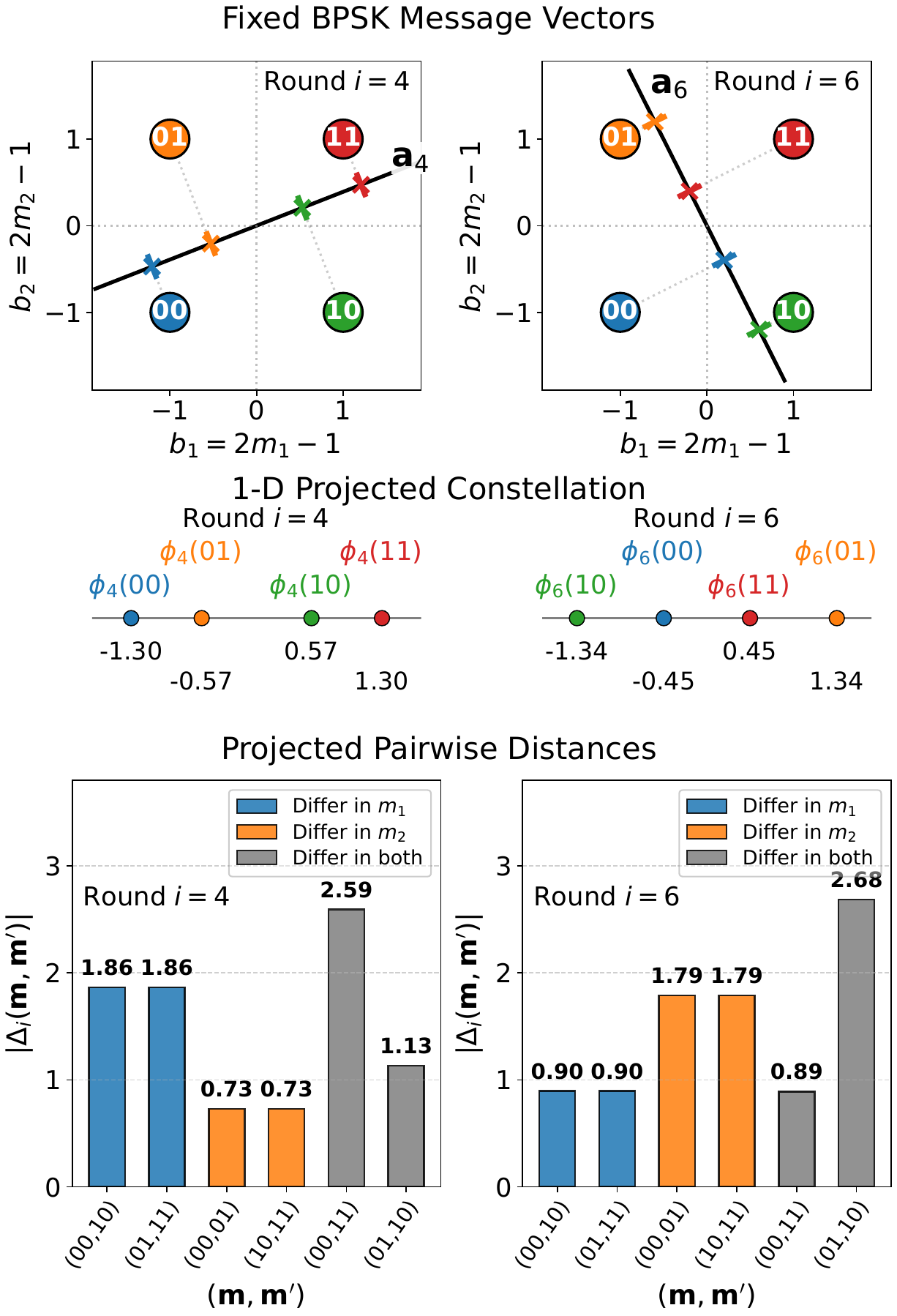}
   \caption{Learned projections at rounds $i=4$ and $i=6$
    for $K=2$ and $T=6$, with forward SNR $0$ dB and feedback SNR
    $20$ dB. The top row shows the BPSK representations $\mathbf{b}(\mathbf{m})=(b_1,b_2)$ of the four messages. The black line denotes the learned projection direction
    $\mathbf{a}_i$ for the indicated round.
    The middle row shows the corresponding one-dimensional projected
    constellation
    $\phi_i(\mathbf{m})
    =\mathbf{a}_i^\top\mathbf{b}(\mathbf{m})$,
    and the bottom row shows the projected pairwise distances
    $|\Delta_i(\mathbf{m},\mathbf{m}')|
    =|\phi_i(\mathbf{m})-\phi_i(\mathbf{m}')|$.
    The two displayed rounds illustrate that different learned
    projection directions emphasize the separation of different
    message pairs.
    Messages are labeled by the binary string $m_1m_2$.}
   \label{fig:projection_interpretation}
\end{figure}

To provide intuition for the learned projection-based \gls{apmf} code,
Fig.~\ref{fig:projection_interpretation} shows two representative
rounds, $i=4$ and $i=6$, for $K=2$ and $T=6$, with forward SNR
$0$ dB and feedback SNR $20$ dB.
Unlike the \gls{pam}-based scheme, which uses the same projection
direction in every round, the proposed scheme learns a separate
projection vector $\mathbf{a}_i$ for each round.
At round $i=4$, $\mathbf{a}_4$ is nearly aligned with the $b_1$ axis.
Thus, message pairs that differ in $m_1$ have larger projected
distances than pairs that differ in $m_2$, and $(00,11)$ has the
largest separation.
At round $i=6$, $\mathbf{a}_6$ is nearly aligned with the $b_2$ axis.
Thus, pairs that differ in $m_2$ have larger projected distances, and
$(01,10)$ has the largest separation.
These examples show that different learned projection directions can
emphasize different message pairs across rounds.
Since the likelihood update depends on
$\Delta_i(\mathbf{m},\mathbf{m}')$, larger projected distances make
the corresponding message hypotheses easier to distinguish.
The variation of $\mathbf{a}_i$ is learned rather than imposed by
design.
A theoretical explanation of this round-dependent behavior remains
an interesting open problem.

\section{Conclusion}\label{sec:conclusion}

{In this paper, we develop a posterior-mean feedback coding framework
for short-blocklength AWGN channels with feedback. For noiseless passive
feedback, the proposed \gls{npmf} uses a closed-form encoder and optimal \gls{map} decoding, reduces to
\gls{ppv} for $K=1$, and naturally extends to multi-bit messages.
In the considered short-blocklength settings, \gls{npmf} achieves the
lowest \gls{bler} among the evaluated analytical and learned baselines.
For noisy active feedback, we propose \gls{apmf} and a learned
projection-based extension that mitigates the one-dimensional packing
limitation of \gls{pam} and improves robustness for larger message sizes. The projection-based scheme preserves an interpretable structure with only a small number of trainable parameters, while achieving competitive or better finite-blocklength performance. \footnote{The manuscript was written by the authors. OpenAI Codex 5.5
was used for grammar checking and to assist in verifying the lower-bound
derivation in Appendix~\ref{apx:sandwich}.}}

\bibliographystyle{IEEEtran}
\bibliography{reference}

\appendices
\section{Proof of Proposition~\ref{prop:sandwich}}
\label{apx:sandwich}

\begin{IEEEproof}
Let $p_{\mathbf{m}} \coloneqq \mathbb{P}(\mathbf{M}=\mathbf{m}\mid\mathbf{Y}^T)$,
$\mathbf{p} \coloneqq \{p_{\mathbf{m}}\}_{\mathbf{m} \in \mathbb{F}_2^K}$ denote
the posterior distribution over all $2^K$ messages given $\mathbf{Y}^T$, and
$p_{\max} \coloneqq \max_{\mathbf{m}} p_{\mathbf{m}}$. Define the pointwise
\gls{map} error probability
\begin{equation}
    e \;\coloneqq\; 1 - p_{\max},
    \label{eq:e_def}
\end{equation}
so that $\BLER = \mathbb{E}[e]$ and, by \eqref{eq:ce_min},
$\Loss_{\mathrm{CCE}}^{\star} = \mathbb{H}(\mathbf{M}\mid\mathbf{Y}^T)
= \mathbb{E}[\mathbb{H}(\mathbf{p})]$, where $\mathbb{H}(\mathbf{p})
= -\sum_{\mathbf{m}} p_{\mathbf{m}} \log_2 p_{\mathbf{m}}$ is the entropy
of the posterior for a specific channel realization. Both bounds are derived
by first establishing a pointwise inequality on $\mathbb{H}(\mathbf{p})$,
then taking the expectation over all channel realizations.

\textbf{Key decomposition.}
If $e=0$, then $\mathbb{H}(\mathbf{p})=0$, and both pointwise
bounds hold with equality. For $e>0$, choose
$\widehat{\mathbf{m}}=\arg\max_{\mathbf{m}}p_{\mathbf{m}}$
and define $r_{\mathbf{m}}\coloneqq p_{\mathbf{m}}/e$ for
$\mathbf{m}\neq\widehat{\mathbf{m}}$. Then
$\mathbf{r}=\{r_{\mathbf{m}}\}_{\mathbf{m}\neq\widehat{\mathbf{m}}}$
is a probability distribution over the remaining $2^K-1$ messages.

Substituting $p_{\widehat{\mathbf{m}}}=1-e$ and $p_{\mathbf{m}}=e\,r_{\mathbf{m}}$
into the entropy definition and expanding
$\log_2(e\,r_{\mathbf{m}})=\log_2 e+\log_2 r_{\mathbf{m}}$ yields
\begin{equation}
    \mathbb{H}(\mathbf{p})
    \;=\;
    \mathbb{H}_{2}(e)
    \;+\;
    e\,\mathbb{H}(\mathbf{r}),
    \label{eq:decomp}
\end{equation}
where
$\mathbb{H}(\mathbf{r}) = -\sum_{\mathbf{m}\neq\widehat{\mathbf{m}}}
 r_{\mathbf{m}}\log_2 r_{\mathbf{m}} \geq 0$
is the conditional entropy over incorrect messages.

\textbf{Lower bound.}
We show that $\mathbb{H}(\mathbf{p})\geq 2e$.
For $0< e\leq 1/2$, \eqref{eq:decomp} gives
$\mathbb{H}(\mathbf{p})\geq\mathbb{H}_{2}(e)$.
Since $\mathbb{H}_{2}$ is concave with
$\mathbb{H}_{2}(0)=0$ and $\mathbb{H}_{2}(1/2)=1$,
we have $\mathbb{H}_{2}(e)\geq 2e$ on this interval.

For $e>1/2$, every posterior probability satisfies
$p_{\mathbf{m}}\leq p_{\max}=1-e$, so
\[
\mathbb{H}(\mathbf{p})\geq -\log_2(1-e).
\]
Let $g(e)=-\log_2(1-e)$. By convexity,
\[
g(e)\geq
1+\frac{2}{\ln 2}\left(e-\frac12\right)
\geq 2e,
\]
where we used $g(1/2)=1$, $g'(1/2)=2/\ln 2>2$,
and $e>1/2$.

Thus, $\mathbb{H}(\mathbf{p})\geq 2e$ in both cases.
Taking expectations yields
\begin{equation}
\Loss_{\mathrm{CCE}}^{\star}
=
\mathbb{E}[\mathbb{H}(\mathbf{p})]
\geq
2\mathbb{E}[e]
=
2\BLER.
\end{equation}

\textbf{Upper bound.}
Since $\mathbb{H}(\mathbf{r})\leq\log_2(2^{K}-1)$,
equation~\eqref{eq:decomp} gives
$\mathbb{H}(\mathbf{p})\leq\mathbb{H}_{2}(e)+e\log_2(2^{K}-1)$.
Taking expectations and applying Jensen's inequality to the concave
$\mathbb{H}_{2}$:
\begin{align}
    \Loss_{\mathrm{CCE}}^{\star}
    &\leq
    \mathbb{E}[\mathbb{H}_{2}(e)]
    +\mathbb{E}[e]\log_2(2^{K}-1)
    \nonumber\\
    &\leq
    \mathbb{H}_{2}(\mathbb{E}[e])
    +\mathbb{E}[e]\log_2(2^{K}-1)
    \nonumber\\
    &=
    \mathbb{H}_{2}(\BLER)
    +\BLER\log_2(2^{K}-1).
\end{align}
\end{IEEEproof}

\section{Error Analysis for \gls{pam}-Based \gls{apmf}}
\label{app:apmf_error_analysis}
We derive a union bound on the BLER of the \gls{pam}-based \gls{apmf} code. The effective variance $\sigma_{{\rm eff},i}^2 = \sigma_f^2
    +
    \left(
    \frac{\lambda_i/D_{i-1}}{\omega_{i-1}/\nu_{i-1}}
    \right)^2
    \sigma_{fb}^2$ is defined in \eqref{eq:sigma_eff_apmf}. Suppose the true transmitted symbol is $\theta_j$. A pairwise error in favor of $\theta_k$ occurs if
\begin{equation*}
    S_T^{(j,k)}
    =
    \ell_T(\theta_j)-\ell_T(\theta_k)
    <0 .
\end{equation*}
From the log-likelihood update in \eqref{eq:apmf_ll_update}, under the true hypothesis $\theta_j$,
\begin{equation*}
    Y_i
    =
    \frac{\lambda_i}{D_{i-1}}
    \left(
    \theta_j-\widehat{\Theta}_{i-1}^{\rm RX}
    \right)
    +
    N_{{\rm eff},i},
\end{equation*}
where $D_0=1$ and $N_{{\rm eff},i}$ has variance $\sigma_{{\rm eff},i}^2$. Therefore,
\begin{equation*}
\begin{aligned}
S_T^{(j,k)}
=
\sum_{i=1}^{T}\Bigg[
&\frac{
(\lambda_i/D_{i-1})^2(\theta_j-\theta_k)^2
}{
2\sigma_{{\rm eff},i}^2
} \\
&+
\frac{
(\lambda_i/D_{i-1})(\theta_j-\theta_k)
}{
\sigma_{{\rm eff},i}^2
}
N_{{\rm eff},i}
\Bigg].
\end{aligned}
\end{equation*}
Hence, $S_T^{(j,k)}$ is a Gaussian random variable with mean
\begin{equation*}
    \mu_{j,k}
    =
    \sum_{i=1}^{T}
    \frac{
    \left(\lambda_i/D_{i-1}\right)^2
    (\theta_j-\theta_k)^2
    }
    {2\sigma_{{\rm eff},i}^2}
\end{equation*}
and variance
\begin{equation*}
    \sigma_{j,k}^2
    =
    \sum_{i=1}^{T}
    \frac{
    \left(\lambda_i/D_{i-1}\right)^2
    (\theta_j-\theta_k)^2
    }
    {\sigma_{{\rm eff},i}^2}
    =
    2\mu_{j,k}.
\end{equation*}
Since $\theta_j-\theta_k=2\eta(j-k)$, the pairwise error probability is
\begin{equation*}
    \mathbb{P}\!\left(S_T^{(j,k)}<0\right)
    =
    Q\!\left(
    |j-k|\sqrt{\widetilde{\Lambda}_T}
    \right),
\end{equation*}
where $\widetilde{\Lambda}_T
    =
    \eta^2
    \sum_{i=1}^{T}
    \frac{
    \left(\lambda_i/D_{i-1}\right)^2
    }
    {\sigma_{{\rm eff},i}^2}$.

Applying the union bound over all \gls{pam} symbol pairs gives
\begin{equation}
    \BLER_{\text{PAM-APMF}}
    \leq
    \frac{2}{U}
    \sum_{m=1}^{U-1}
    (U-m)
    Q\!\left(
    m\sqrt{\widetilde{\Lambda}_T}
    \right).
\end{equation}

\section{Error Analysis for Projection-Based \gls{apmf}}
\label{app:projection_apmf_error_analysis}
We derive a union bound of the projection-based \gls{apmf} code for fixed projection vectors $\{\mathbf{a}_i\}_{i=1}^T$. Suppose the true message is $\mathbf{m}\in\F_2^K$. An error occurs if there exists another message $\mathbf{m}'\neq \mathbf{m}$ such that
\begin{equation*}
    S_T^{(\mathbf{m},\mathbf{m}')}
    =
    \ell_T(\mathbf{m})-\ell_T(\mathbf{m}')
    <0 .
\end{equation*}

For each round $i$, define the projected pairwise distance
\begin{equation}
    \Delta_i(\mathbf{m},\mathbf{m}')
    =
    \phi_i(\mathbf{m})-\phi_i(\mathbf{m}')
    =
    \mathbf{a}_i^\top
    \left(
    \mathbf{b}(\mathbf{m})-\mathbf{b}(\mathbf{m}')
    \right).
\end{equation}
From the log-likelihood update, under the true message $\mathbf{m}$,
\begin{equation*}
    Y_i
    =
    \frac{\lambda_i}{D_{i-1}}
    \left(
    \phi_i(\mathbf{m})-\widehat{\Phi}_{i}^{\rm RX}
    \right)
    +
    N_{{\rm eff},i},
\end{equation*}
where $D_0=\sqrt{\E[\Phi_1^2]}$ and $N_{{\rm eff},i}$ has variance $\sigma_{{\rm eff},i}^2$. Therefore,
\begin{equation*}
\begin{aligned}
S_T^{(\mathbf{m},\mathbf{m}')}
=
\sum_{i=1}^{T}\Bigg[
&\frac{
(\lambda_i/D_{i-1})^2
\Delta_i^2(\mathbf{m},\mathbf{m}')
}{
2\sigma_{{\rm eff},i}^2
} \\
&+
\frac{
(\lambda_i/D_{i-1})
\Delta_i(\mathbf{m},\mathbf{m}')
}{
\sigma_{{\rm eff},i}^2
}
N_{{\rm eff},i}
\Bigg].
\end{aligned}
\end{equation*}
Hence, $S_T^{(\mathbf{m},\mathbf{m}')}$ is a Gaussian random variable with mean
\begin{equation*}
    \mu_{\mathbf{m},\mathbf{m}'}
    =
    \sum_{i=1}^{T}
    \frac{
    \left(\lambda_i/D_{i-1}\right)^2
    \Delta_i^2(\mathbf{m},\mathbf{m}')
    }
    {2\sigma_{{\rm eff},i}^2}
\end{equation*}
and variance
\begin{equation*}
    \sigma_{\mathbf{m},\mathbf{m}'}^2
    =
    \sum_{i=1}^{T}
    \frac{
    \left(\lambda_i/D_{i-1}\right)^2
    \Delta_i^2(\mathbf{m},\mathbf{m}')
    }
    {\sigma_{{\rm eff},i}^2}
    =
    2\mu_{\mathbf{m},\mathbf{m}'}.
\end{equation*}
Thus,
\begin{equation*}
    \Prob\!\left(
    S_T^{(\mathbf{m},\mathbf{m}')}<0
    \right)
    =
    Q\!\left(
    \sqrt{
    \widetilde{\Lambda}_T^{\rm Proj}(\mathbf{m},\mathbf{m}')
    }
    \right),
\end{equation*}
where
\begin{equation*}
    \widetilde{\Lambda}_T^{\rm Proj}(\mathbf{m},\mathbf{m}')
    =
    \frac{1}{4}
    \sum_{i=1}^{T}
    \frac{
    \left(\lambda_i/D_{i-1}\right)^2
    \Delta_i^2(\mathbf{m},\mathbf{m}')
    }
    {\sigma_{{\rm eff},i}^2}.
\end{equation*}
Applying the union bound over all message pairs gives
\begin{equation}
    \BLER_{\text{Proj-APMF}}
    \leq
    \frac{1}{2^K}
    \sum_{\mathbf{m}\in\F_2^K}
    \sum_{\substack{\mathbf{m}'\in\F_2^K\\ \mathbf{m}'\neq \mathbf{m}}}
    Q\!\left(
    \sqrt{
    \widetilde{\Lambda}_T^{\rm Proj}(\mathbf{m},\mathbf{m}')
    }
    \right).  
\end{equation}



 




\end{document}